\newcounter{pp}[section]
\newtheorem{theorem}{Theorem}[section]
\newtheorem{lemma}{Lemma}[section]
\newtheorem{proposition}{Proposition}[section]
\theoremstyle{remark}
\newcommand{\bA}{\mathbf{A}}
\newcommand{\ba}{\mathbf{a}}
\newcommand{\bB}{\mathbf{B}}
\newcommand{\bb}{\mathbf{b}}
\newcommand{\bc}{\mathbf{c}}
\newcommand{\be}{\mathbf{e}}
\newcommand{\bI}{\mathbf{I}}
\newcommand{\bl}{\bm{\ell}} 
\newcommand{\bL}{\mathbf{L}}
\newcommand{\mm}{\mathbf{m}}
\newcommand{\bM}{\mathbf{M}}
\newcommand{\bn}{\mathbf{n}}
\newcommand{\bN}{\mathbf{N}}
\newcommand{\bQ}{\mathbf{Q}}
\newcommand{\vel}{\mathbf{v}}
\newcommand{\bu}{\mathbf{u}}
\newcommand{\R}{\mathbb{R}}
\newcommand{\bR}{\mathbf{R}}
\newcommand{\Rey}{\mathcal{R}}
\newcommand{\bS}{\mathbb{S}}
\newcommand{\bST}{\mathbf{S}}
\newcommand{\bSs}{\bS^{\text{sym}}}
\newcommand{\bT}{\mathbf{T}}
\newcommand{\bTT}{\mathbb{T}}
\newcommand{\bv}{\mathbf{v}}
\newcommand{\bx}{\mathbf{x}}
\newcommand{\V}{V_2}
\newcommand{\Cl}{\mathcal{C}}
\newcommand{\qp}{\boxtimes}
\newcommand{\WD}{\mathcal{D}}
\renewcommand{\rho}{\varrho}
\DeclareMathOperator{\tp}{\otimes}
\DeclareMathOperator{\kp}{\boxtimes}
\DeclareMathOperator{\tr}{tr}
\newcommand{\comm}[1]{{\color{black}#1}}
\def\msfO{{\mathsf O}}
\def\msfS{{\mathsf S}}
\newcommand{\SO}[1]{
\mbox{$\msfS \msfO(#1)$}
}
\newcommand{\Og}[1]{
\mbox{$\msfO(#1)$}
}
\def\essotre{{\SO{3}}}
\def\essodue{{\SO{2}}}
\def\Otre{{\Og{3}}}
\def\Odue{{\Og{2}}}
\def\real{{\mathbb{R}}}
\def\ex{\be_{1}}
\def\ey{\be_{2}}
\def\ez{\be_{3}}
\def\lx{\bl_{1}}
\def\ly{\bl_{2}}
\def\lz{\bl_{3}}
\def\diadlxx{{\lx\otimes\lx}}
\def\diadlyy{{\ly\otimes\ly}}
\def\diadlzz{{\lz\otimes\lz}}
\def\transp{^{\mathrm T}}
\def\suchcol{\,\colon \, }
\def\dd{\mathrm d}
\title{\textbf{Determination of the symmetry classes\\of orientational ordering tensors}}
\author[1]{Stefano S.\,Turzi \thanks{\texttt{stefano.turzi@polimi.it}}}
\author[2]{Fulvio Bisi}%
\affil[1]{\small{Dipartimento\,di\,Matematica, Politecnico\,di\,Milano, 
Piazza\,Leonardo\,da\,Vinci,\,32~~20133\,Milano\,(Italy)}}
\affil[2]{\small{Dipartimento di Matematica, Universit\`a di Pavia, Via Ferrata, 1~~27100 Pavia (Italy)}}
\date{\today}
\begin{document}
\maketitle

\begin{abstract}
The orientational order of nematic liquid crystals is traditionally studied by means of the second-rank ordering tensor 
$\bS$. When this is calculated through experiments or simulations, the symmetry group of the phase is not known 
\emph{a-priori}, 
but needs to be deduced from the numerical realisation of $\bS$, which is affected by numerical errors. 
There is no generally accepted procedure to perform this analysis. Here, we provide a new algorithm suited 
to identifying the symmetry group
of the phase. As a by product, we prove that there are only five phase-symmetry classes of the second-rank
ordering tensor and give 
a canonical representation of $\bS$ for each class. The nearest tensor of the assigned symmetry is determined 
by group-projection. In order to test our procedure, we generate uniaxial and biaxial phases in a system
of interacting particles, endowed with $D_{\infty h}$ or $D_{2h}$ symmetry, which mimic the outcome of 
Monte-Carlo simulations. The actual symmetry of the phases is correctly identified, along with the optimal
choice of laboratory frame. 
\end{abstract}

\section{Introduction}

The orientational order of an ensemble of molecules is a key feature in complex fluids made of anisotropic molecules,
e.g. liquid crystals. For example, the phase of a liquid crystal affects some rheological and optical properties 
of the material, such as viscosity coefficients and refracting index. From a mathematical viewpoint, the phase is
a macroscopic manifestation of the point-group symmetry of the mesoscopic orientational order of the molecules.
The precise quantification of the notion of order requires the introduction of the orientational probability 
density function of the molecules. It is impractical to study this function in its full generality from 
a mathematical perspective. Furthermore, only its very first moments are amenable of experimental investigation.
For these reasons the orientational probability density is usually truncated at the second-rank level and this 
defines the second-rank ordering tensor $\bS$. Usually, the matrix entries of $\bS$ are 
considered to capture correctly the most important features of the mesoscopic order. 

The final output of a molecular dynamics or a Monte Carlo simulation of a liquid crystal compound is given in terms 
of the orientation of the molecular frames of reference, for all molecules. The ordering tensor  $\bS$ is then 
obtained by averaging over all molecular orientations. However, the computations have to be carried out with respect 
to an arbitrarily chosen laboratory frame. By contrast, key physical information as phase symmetry, directors 
and order parameters are readily accessible only when the laboratory axes are properly chosen in agreement 
with the yet unknown underlying symmetry of the orientational distribution. Therefore, the experimental 
or simulation data need to be analysed and refined in order to capture the physical features of the system 
at the meso-scale and there is no standard method to perform this analysis.

The main motivation of the present work is to provide such a systematic procedure to determine the symmetry class
(the ``phase'' of the system), the symmetry axes (the ``directors''), and the scalar order parameters 
of a liquid crystal compound whose second-rank ordering tensor is obtained through experiments or simulations. 

Experimental or numerical errors are a further source of complications and may hinder the correct identification
of the phase symmetry\comm{, even in the simplest cases}. For instance, \comm{a uniaxial order could be
described naively as a phase in which rod-like molecules are substantially aligned parallel to a fixed direction, 
identified by a}
well defined director; \comm{in such phase} most of the entries of $\bS$ \comm{ought to} vanish, 
\comm{but the presence of errors can make all the entries generally non-zero. Furthermore,
an unwise choice of the laboratory axes may be the cause of several non-vanishing entries.}
When dealing with such an ordering tensor, it may not be immediately evident whether 
\comm{these non-vanishing entries reveal an intrinsic lack of uniaxial phase symmetry or are a consequence of one --or possibly both-- of the issues described.}

Our strategy takes inspiration from a similar problem in Elasticity where the main concern is the identification 
of the linear elastic tensor of a particular material symmetry. This problem has been intensively studied 
by a number of authors in the last decades. We refer to \cite{ForteVianello,2011Slawinski} for a historical overview.
In particular, we adopt similar mathematical techniques and in this respect we have found the following papers 
particularly illuminating \cite{ForteVianello,2011Slawinski,2004Slawinski,1987Wadhawan,1963Toupin,1998Geymonat}.

This paper is organised as follows. Sec.~\ref{sec:background} reviews the theoretical background 
on orientational order parameters; namely, the spherical and Cartesian definitions of orientational ordering
tensors are discussed. The specific case of \emph{second-rank} ordering tensors is developed in 
Sec.~\ref{sec:second-rankOPs}. 
\comm{The definitions we provide here best fit the group-theoretic analysis put forward in the rest of the paper,
and allow taking a non-standard view on this topic, by describing the ordering tensor in terms of a linear map
in the space of symmetric, traceless second rank tensors;} an analogous approach 
is found in Refs.~\cite{2011turzi,2015chillb,2015chill,07Rosso}. 
In Secs.~\ref{sec:second-rankOPs_symmetry} and~\ref{sec:second-rankOPs_symmetryclasses} we define the notion
of symmetry-class of an ordering tensor and prove that it is only possible to distinguish five phase-symmetry classes
at the second-rank level. A finer identification of the phase group of a non-polar liquid crystal requires higher 
rank ordering tensors. The following Sec.~\ref{sec:identification} deals with the identification of the closest 
ordering tensor that belongs to a given symmetry class. To this end, we introduce the invariant projection 
on a chosen symmetry group, define the distance of the raw ordering tensor from one of the five symmetry classes 
and provide a canonical representation of the ordering tensor in each symmetry class. After all these mathematical ingredients are established, 
 we describe the algorithm for the determination of the effective phase in Sec.~\ref{sec:identification_effectivephase}. 
 Sec.~\ref{sec:examples} contains the discussion of two paradigmatic examples where the algorithm is put into practice and
 Sec.~\ref{sec:conclusions} summarises the results.

\subsection{Notations.} \label{sec:notations}

\comm{For the reader's convenience, let us give a brief description of some notational conventions used throughout the paper.

\begin{enumerate}
\item Vectors in the linear space $W$  isomorphic to $\real^3$ are denoted by boldface small letters ($\ba, \bb, \bc, \dots,\bu, \vel,\dots$). 
After choosing an orthonormal basis $\{\ex,\ey,\ez\}$, the coordinate of a vector $\vel$ are denoted with the same plain letter, and a subscript (in general $i=1,\dots,3$) distinguishes the coordinates: $\vel=v_1 \ex+ v_2 \ey + v_3 \ez$.  This avoids confusion between $\vel_1$, which the vector 1 in a list of vectors, and $v_1$ which is the first coordinate of vector $\vel$: $v_1 = \vel \cdot \ex$.
Unit vectors in the laboratory reference frame play a special role, therefore we will denote them by $\bl_{\xi}$,
with $\xi = x,y,z$.
 
 \item Second-rank tensors, i.e. linear 
 maps $\bL \colon W \to W$ in the linear space $W$, are denoted
 by boldface capital letters ($\bA, \bB, \dots, \allowbreak \bR, \bST, \bT, \dots$); $\bI$ is the identity tensor.
 
 \item \label{sec:notationsdiad}
 The tensor (or dyadic) product in $W$ between vectors $\ba$ and $\bb$ is a second-rank tensor such that
 $(\ba \tp \bb)\bv = (\bb\cdot\bv)\ba$ for every vector $\bv$, 
 where the dot ${}\dot{}$ denotes the standard 
 inner (scalar) product (i.e. the matrix representative of the dyadic product is
 $(\ba \tp \bb)_{ij} = a_i b_j$ $(i,j = 1,\ldots, 3)$ in an orthonormal basis).
 
 \item \label{sec:innerT}
 The scalar product between two second-rank tensors $\bT$, $\bL$ is defined as
 \begin{equation} \label{eq:dot_tensors}
 \bT\cdot\bL = \tr(\bT\transp \bL) . 
 \end{equation}
 
 \item
 Second-rank tensors can be endowed with the structure of a linear space, denoted by 
 $L(W)$; linear maps $\bTT \colon L(W) \to L(W)$ in such space are denoted
 by ``blackboard'' capital letters ($\bS, \bTT, \dots$).
 
 \item \label{sec:tensordiad}
 The square tensor product $\qp$ between second-rank tensors is to be interpreted as a 
 tensor dyadic product: 
 \[
 (\bL \qp \bM) \,\bT = (\bM \cdot \bT)\,\bL, \quad \text{ for every tensor } \bT.
 \]

 \item $\Otre$ is the orthogonal group, i.e. the group of all isometries of $W$ ($\real^3$); 
 $\bA \in \Otre \, \Leftrightarrow \, \bA^{-1} =\bA\transp \, \Rightarrow \det \bA = \pm 1$, 
 where the superscript $()\transp$ denotes the transposition.
 
 \item $\essotre$ is the special orthogonal group, i.e. the subgroup of $\Otre$ of elements $\bR$ satisfying 
 $\det(\bR) = 1$. In other words, the group of 3D rotations.
 
 \item Similarly, $\Odue$ is the orthogonal group in two dimensions, and 
 $\essodue$ is the special orthogonal group in two dimensions
 
 \item \label{sec:schoenflies}
 For point symmetry groups, subgroups of $\Otre$, we comply with
 the standard Sch\"{o}nflies notation \cite{PointGroups,2001Michel,McWeeny}.
 Here, we only give a brief description of these groups and refer the reader 
 to the cited references for a more in-depth discussion. The complete list include the seven infinite sequences 
 of the axial groups $C_n$, $C_{nh}$, $C_{nv}$, $S_{2n}$, $D_{n}$, $D_{nd}$, $D_{nh}$ and seven exceptional 
 groups $T$, $T_d$, $T_h$, $O$, $O_h$, $I$, $I_h$. The axial group $C_n$ contains $n$-fold rotational symmetry 
 about an axis, and 
 $D_n$ contains $n$-fold rotational symmetry about an axis and a 2-fold rotation about a perpendicular axis.
 
 The other axial groups are obtained by adding reflections across planes through the main rotation axis, 
 and/or reflection across the plane perpendicular to the axis. In particular, sub-indexes $h$, $v$, and $i$ 
 stands for ``horizontal'', ``vertical'' and ``inversion'' and denotes, respectively, the presence of a 
 mirror reflection perpendicular to rotation axis ($\sigma_h$), a mirror reflection parallel to the 
 rotation axis ($\sigma_v$) and the inversion ($\iota = -\bI$). 
 We recall that $C_1$ is the trivial ``no symmetry'' group; $S_2$ is the group of order 
 two that contains the inversion and is usually written as $C_i$; the group of order 
 two with a single mirror reflection is denoted by $C_{1h}$, $C_{1v}$ or $C_s$. 
 By contrast, the seven exceptional groups contains multiple 3-or-more-fold rotation axes: 
 $T$ is the rotation group of a regular tetrahedron, $O$ is the rotation group of a cube or octahedron and $I$ is the rotation group of the icosahedron. Finally, taking $n \to \infty$ yields the additional continuous groups: 
 $C_{\infty}$, $C_{\infty h}$, $C_{\infty v}$, $D_{\infty}$, $D_{\infty h}$. 
 $C_{\infty}$ is another notation for $\essodue$, 
 and $C_{\infty v}$ is $\Odue$, which can be generated by $C_{\infty}$ and a reflection through 
 any vertical plane containing the vertical rotation axis.
 
 \item \label{sec:ensemble}
The ensemble average of a function $\chi$ with respect to the orientational probability distribution of the molecules is sometimes denoted by angle brackets: $\langle \chi \rangle$.
 
\end{enumerate}
}

\section{General background on orientational order parameters}
\label{sec:background}

\comm{In the first part of the present section, we recall the basics of liquid crystal theory;
the simplest mesogenic molecules have cylindrical symmetry ($D_{\infty h}$, see Sec.~\ref{sec:notations}(\ref{sec:schoenflies}.)),
and if they are basically arranged so as to have their main axis (identified by a unit vector $\mm$ parallel to the 
cylindrical symmetry axis) along one preferred direction, the mesophase is \emph{uniaxial}; however, each molecule 
might slightly deviate from the alignment direction, and in a uniaxial phase this deviation occurs randomly with equal 
probability in any other direction.

Formally speaking, if $f(\mm)$ is the distribution describing the probability that the direction of the 
main axis of the molecule is exactly $\mm$, and considering that $f(\mm) = f(-\mm)$, since the opposite direction
cannot be distinguished due to the symmetry of the molecule, we need to resort to the second moment of the distribution
\[
\bN = \int_{S^2} (\mm \otimes \mm) f(\mm)\ \dd \Omega\,,
\]
where the integration is performed over the unit sphere $S^2$, in other words on all directions (e.g. $\dd \Omega = \sin\beta \dd \beta\ \dd \alpha$, with $\alpha$ and $\beta$ defined as the colatitude and the azimuth in a spherical frame).

Since we are interested in describing \emph{non}-isotropic phases, typically the order tensor 
$\bQ = \bN - \frac{1}{3} \bI$ is used, which is identically zero in the isotropic phase.
$\bQ$ is a symmetric traceless tensor (since $\bN$ is symmetric and $\tr \bN =1$), and a uniaxial phase
corresponds to (at least) two equal eigenvalues for $\bQ$; if we choose a laboratory (orthonormal) reference frame  
$( \lx, \ly, \lz )$ for which $\lz$ is along the preferred direction of the molecule (known as 
the \emph{director} $\bn$), we can write
\begin{equation}\label{eq:Quniaxial}
 \bQ = S \left(\diadlzz - \frac{1}{3}\bI \right) = S \left(\bn \otimes \bn - \frac{1}{3}\bI \right)\,; 
\end{equation}
the scalar $S \in [-\frac{1}{2}, 1]$ is the main uniaxial order parameter, and is actually
the ensemble average $\langle \frac{3}{2} \cos^2\beta - \frac{1}{2} \rangle$;
$S=1$ would describe ``perfect'' alignment along $\bn$\footnote{Actually, 
the phase we have described is a \emph{calamitic}
uniaxial phase. A uniaxial phase can also be \emph{discotic} when $\mm$ is randomly distributed in a
plane orthogonal to $\bn$, in which case ``perfect'' alignment would correspond to $S=-\frac{1}{2}$}.

We point out that the matrix representation of the tensor $\bQ=S(\bn \otimes \bn - \frac{1}{3}\bI)$, which intrinsically
describes a uniaxial phase, can be strongly different from the diagonal form if the reference lab frame is 
poorly chosen (recall that the elements of the matrix representative of the dyadic product $\bn \otimes \bn$
are $n_i  n_j, \,\, i,j = 1,\dots,3$; cf. Sec.~\ref{sec:notations}).

In a generic uniaxial phase, the order parameter does not attain its maximum value, as molecules are not
perfectly aligned along $\bn$.
However, all molecule might deviate from the direction of $\bn$ not entirely in a random way. To 
picture the situation, we can say that in the previous case the molecule are uniformly distributed 
in a circular cone the axis of which has direction $\bn$, although the aperture of the cone 
is small. Under different circumstances, e.g. a
frustration induced by the boundary of the region in which the liquid crystal is confined, the
molecules might be distributed in an elliptical cone, meaning there are two directions orthogonal to 
$\bn$ along which the molecule have maximum and minimum deviations. By properly choosing the laboratory frame,
the order tensor (now having 3 different eigenvalues) can be written as
\[
  \bQ = S \left(\diadlzz - \frac{1}{3}\bI \right) + P (\diadlxx - \diadlyy)\,,
\]
where the additional biaxial order parameter $P = \langle \sin^2\beta\, \cos 2\alpha \rangle$ ranges in $[-1,1]$ 
and vanishes for uniaxial phases. However,
whenever $P \neq 0$ the phase is not purely uniaxial, but has a \emph{biaxial} (orthorhombic) phase symmetry: a symmetry 
lower than that of the molecule.

On a dual point of view, molecules endowed with a $D_{2h}$ symmetry 
are characterised by 3 main axes instead of one; whenever only the main axes $\bm$ are aligned,
the phase has a higher $D_{\infty h}$ (uniaxial)
symmetry; differently, when also the other two axes tend to align respectively in two orthogonal directions,
we obtain a phase with the same $D_{2h}$ (biaxial) symmetry. We omit the detail of the description in the laboratory frame 
(see~\cite{2003Virga,universal,bisi2011}), however pictures in Fig~\ref{fig:D2hmols} show the difference of the two cases.
}

\begin{figure}[ht]
\centering
\begin{subfigure}{0.4\textwidth}
\includegraphics[width=0.95\textwidth]{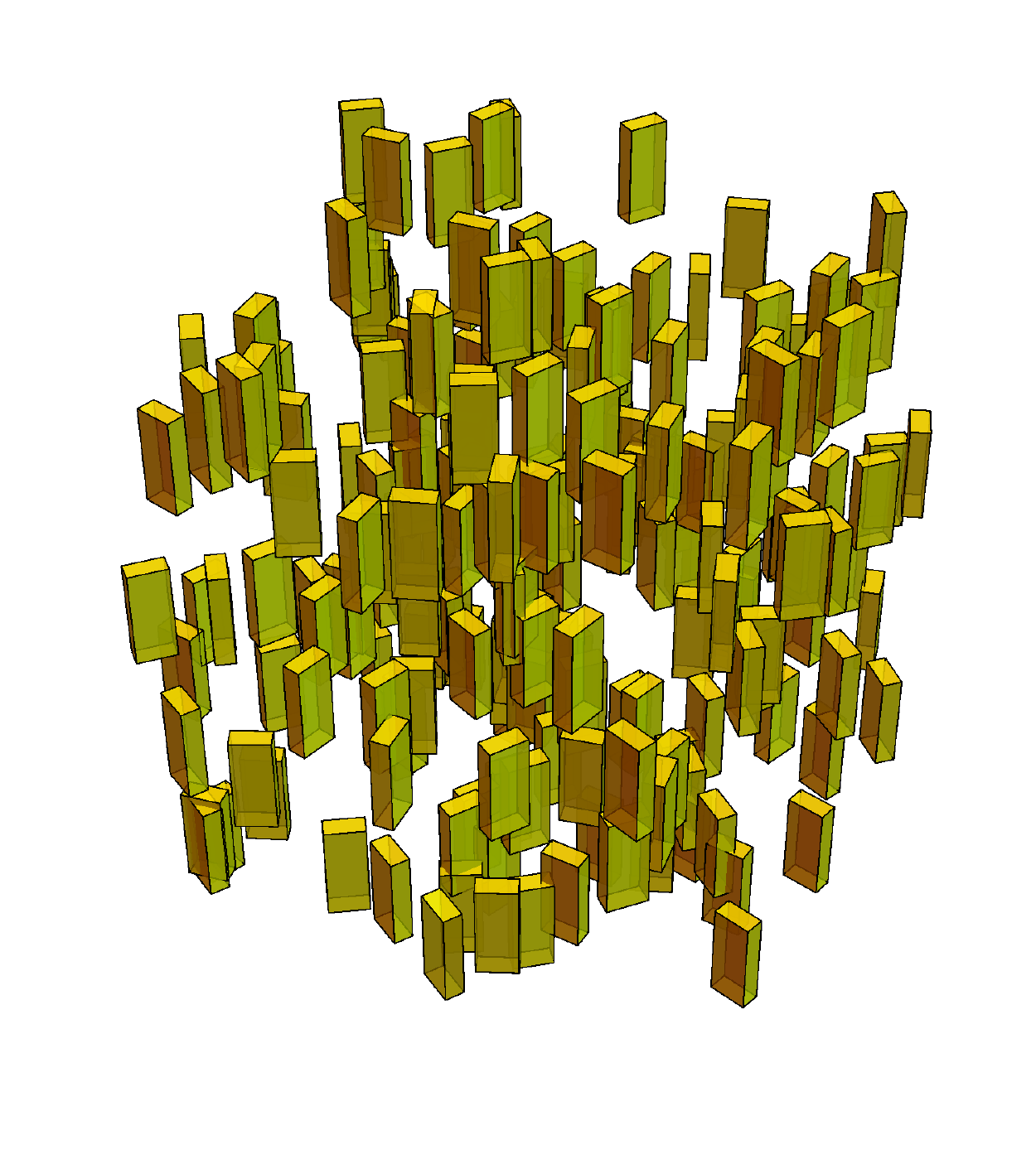} 
\caption{}
\end{subfigure}
\begin{subfigure}{0.4\textwidth}
\includegraphics[width=0.95\textwidth]{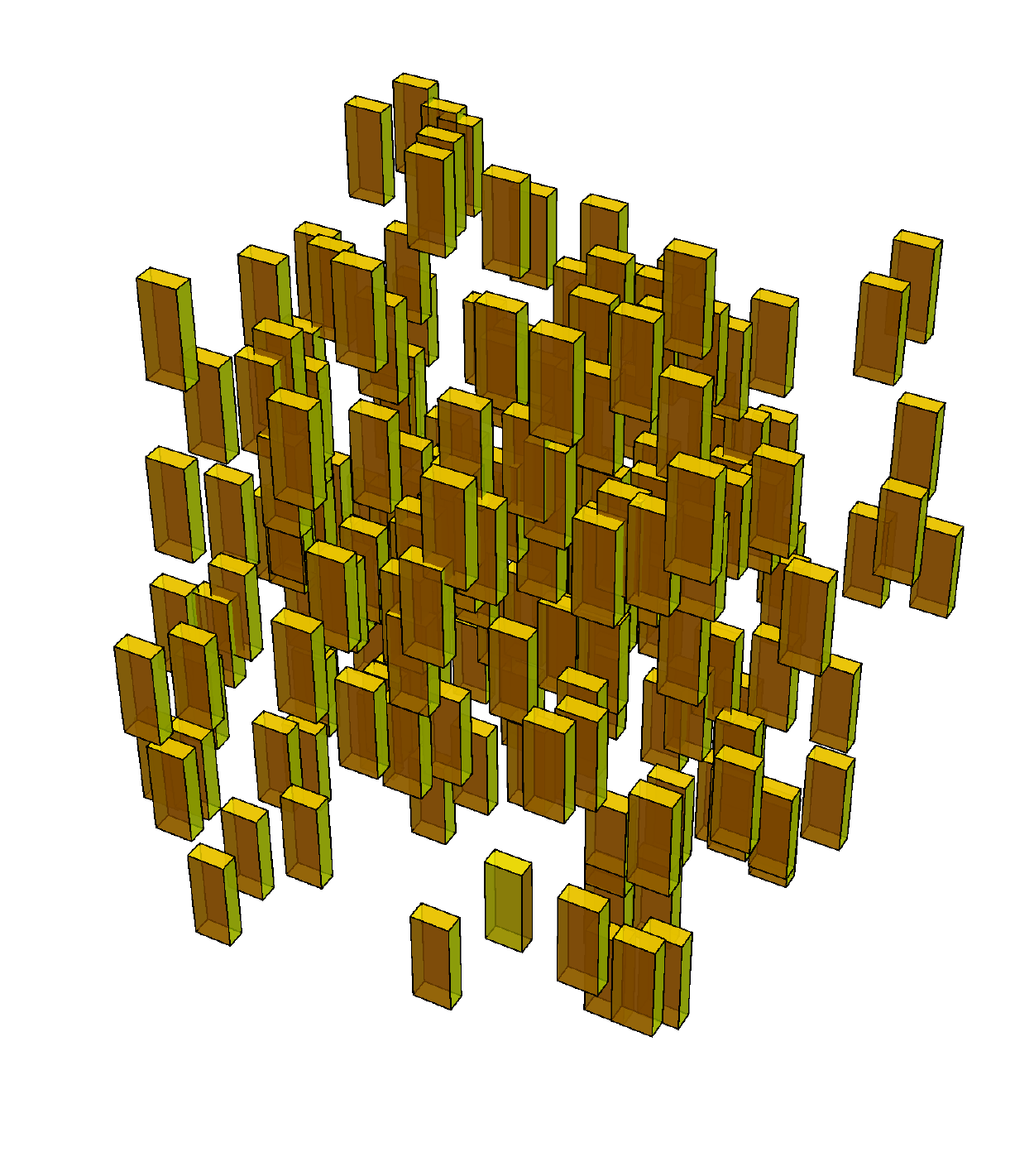} 
\caption{}
\end{subfigure}
\caption{Schematic representations of (a) $D_{\infty h}$ and (b) $D_{2h}$ phase-symmetries made 
with molecules possessing a $D_{2h}$ symmetry.}
\label{fig:D2hmols}
\end{figure}

In general, when no symmetry for the molecule or the phase can be assumed \emph{a-priori}, the orientational distribution of a collection of molecules is most conveniently described in terms 
of a space-dependent probability density function $f(\bx,\bR)$. Here $\bx$ is the space point of the considered molecule 
and $\bR \in \essotre$ is the rotation of a right orthonormal frame set in the molecule, with $(\mm_1, \mm_2, \mm_3)$
unit vectors along the axes, with respect to the laboratory frame of reference, identified by the three mutually 
orthogonal unit vectors $(\bl_1, \bl_2, \bl_3)$. In the following we drop the explicit dependence of $f$ 
on the space point $\bx$ since we are mainly interested in its orientational properties and we will assume 
a continuous dependence on $\bR$. Hence, $f \colon \essotre \to \R_+$ is a continuous function from the group 
of proper rotations
to the non-negative real numbers. Several equivalent description can be given of the rotation matrix $\bR$ that describes 
the orientation of the molecule with respect to the laboratory axes. The matrix $\bR$ is defined as the rotation that 
brings \comm{each} unit vector $\bl_i$ into coincidence with the \comm{corresponding}
molecular unit vector $\mm_i$: \comm{$\bR \bl_i = \mm_i, 
\allowbreak i=1,\dots,3$},
and the matrix entries of $\bR$ are given by the director cosines \comm{$R_{ij} = \bl_i \cdot \bR\bl_j = \bl_i \cdot \mm_j
\allowbreak i,j = 1,\dots,3$}. 
Equivalent but more intrinsic descriptions of the same matrix are obtained as follows
\begin{equation}
\bR = \sum_{i,j=1}^{3} R_{ij} \bl_i \tp \bl_j, \qquad \text{ or } \qquad
\bR = \sum_{k=1}^{3} \mm_k \tp \bl_k \, ,
\label{eq:rotation_matrix}
\end{equation}
(cf. Sec~\ref{sec:notations})

It is known from group representation theory (more precisely from Peter-Weyl theorem, 
see for instance~\cite{1986Barut,Sternberg}) that the matrix entries of all the irreducible representations 
of the rotation group $\essotre$ form a complete orthonormal set for the continuous functions $f \colon \essotre \to \R$.
Traditionally, this irreducible decomposition is based on the properties of the spherical harmonic functions,
as we now recall. The space $L^2(S^2)$ of square integrable functions over the two-dimensional unit sphere $S^2$
can be decomposed into the infinite direct sum of suitable finite-dimensional vector spaces 
$V_j$: $L^2(S^2) = \bigoplus_j V_j$, 
where $j$ is a non-negative integer. Each $V_j$ is generated by the spherical harmonics $\{Y_{jk}\}$ of rank $j$ 
and has dimension $\dim(V_j) = 2j + 1$. The \comm{irreducible} representation of $\essotre$, given by $\WD^{(j)}$,
is defined 
by assigning the linear maps $\WD^{(j)}(\bR): V_j \to V_j$ such that for all $\psi \in V_j$, $\bR \in \essotre$
\begin{equation}
\WD^{(j)}(\bR)\psi(\bx) = \psi (\bR\transp\bx)
\end{equation}
(where a superscript `$\mathrm{T}$' stands for transpose).
The explicit expressions for this irreducible representations of $\essotre$ are usually known as Wigner rotation matrices \cite{Wigner}.
Hence, $f$ is usually expanded in terms of Wigner rotation matrices \cite{Wigner,Rose,1979zannoni}.

\subsection{Cartesian definition}
\label{sec:background_Cartesian}
However, for many purposes it is more convenient to use an equivalent definition and identify $V_j$
with the space of \emph{traceless symmetric tensors} of rank $j$. \comm{In fact, also traceless symmetric tensors can be used to form a basis for the irreducible representations of $\essotre$
\cite{2011turzi,Wigner}.}
This result is known in other branches of mathematics and physics and is sometimes called 
\emph{harmonic tensor decomposition}~\cite{ForteVianello}. The irreducible representations are then
identified with the non-singular linear maps $D^{(j)}(g):V_j \to V_j$, defined as follows. 
If $(\bv_1, \bv_2, \ldots, \bv_j)$ are a set of $j$ vectors belonging to the three dimension real 
vector space $W\simeq \R^3$,
we define the action of $g \in \essotre$ on $V_j$ as the restriction of the diagonal action of $\essotre$ over 
the tensor product space $W^{\tp j}:=\underbrace{W \tp W \tp \ldots \tp W}_{j\text{ times}}$. Explicitly, we define 
\begin{equation}
D(g)(\bv_1 \tp \bv_2 \tp \ldots \tp \bv_j) = g\bv_1 \tp g\bv_2 \tp \ldots \tp g\bv_j
\label{eq:diagonal_action}
\end{equation}
and then extend by linearity to any tensor $\bT \in W^{\tp j}$. The $D^{(j)}(g)$ is then obtained as the 
restriction of $D(g)$ to $V_j \subset W^{\tp j}$. Peter-Weyl theorem can now be used to show that the 
entries $\sqrt{2j+1}\,D^{(j)}_{pm}(\bR)$, form a purely Cartesian complete orthonormal system for 
the continuous functions in $\essotre$, with respect to its \emph{normalised invariant (or Haar) measure} 
\comm{(in terms of the common Euler angles $\alpha, \beta,\gamma$, such measure is explicitly given by
$\dd \mu = \frac{1}{8\pi^2}\sin\beta \dd \beta \ \dd \alpha \ \dd \gamma$)}. 
The Fourier expansion of the probability distribution function is
\footnote{\comm{Basically, this expansion is the analogue of the Fourier analysis for compact groups.}}

\begin{equation}
f(\bR) = \sum_{j=0}^{+\infty}\,\, (2j+1) \!\!\!\comm{\sum_{p,m=0}^{2j}} f_{pm}^{(j)} D^{(j)}_{pm}(\bR) , 
\end{equation}
where the coefficients are readily obtained via orthogonality from the integrals
\begin{equation}
f_{pm}^{(j)} = \int_{\essotre} D^{(j)}_{pm}(\bR) f(\bR)\ \dd\mu(\bR) \, .
\label{eq:exp_coefficients}
\end{equation}

When studying phase transformations, it is often useful to define an order parameter,
that is a quantity which changes the value on going from one phase to the other and that can therefore
be used to monitor the transition. From a molecular point of view, however, we should describe the
passage from one phase to another in terms of the modifications that this produces in the distribution function.
Therefore, a standard assumption is to identify the order parameters with the expansion coefficients 
\eqref{eq:exp_coefficients} 
(see Refs. \cite{1973Boccara,1979zannoni,1985zannoni,2001luckhurst,LuckhurstBook,bisirossoC2v})

\section{Second-rank order parameters}
\label{sec:second-rankOPs}
In nematic liquid crystals, the expansion of the probability distribution function is usually 
truncated at $j=2$.
The $j=0$ term represents the isotropic distribution, while the $j=1$ terms vanishes for symmetry reasons. 
Therefore, the first non trivial information about the molecular order is provided by the $j=2$ terms 
which then acquire a particular important part in the theory. Higher rank terms are sometimes also studied, 
but this is uniquely done in particular cases (i.e. uniaxial molecules) where simplifying assumptions or 
the symmetry of the problem restrict the number of independent order parameters and the complexity of their calculation. 
Surely, it is already difficult to have an insight about the physical meaning of the $j=2$
order parameters, when no particular symmetry is imposed \cite{2011turziC2h,2012turziD2h,2013TurziSluckin}.

Therefore, in the present paper we will only consider the \emph{second-rank order parameters}, i.e.
with $j=2$, although, at least formally, it is easy to extend our definitions to higher rank ordering tensors.
The invariant space  $\V$ is described as the (five dimensional) space of symmetric and traceless \comm{second-rank} 
tensors\footnote{\comm{In Elasticity theory, 
traceless tensors are sometimes called \emph{deviatoric} tensors. 
Since the common second-rank tensors found in Elasticity are symmetric, $V_2$ is usually referred to as the 'space of 
deviatoric tensors'.}}
on the three dimensional real space $W \simeq \R^3$. \comm{ Let $L(\V)$ be the space of the linear maps $\V \to \V$.}

Given the general definition of Cartesian ordering tensor \eqref{eq:exp_coefficients}, we are led to consider 
traceless symmetric tensor spaces and define second-rank ordering tensor (or order parameter tensor) as the \comm{linear 
map $\bS \in L(\V)$}, such that \cite{2015chillb,2015chill}

\begin{equation}
\bS(\bT) = \int_{\essotre} D^{(2)}(\bR)\bT \,f(\bR)\ \dd\mu(\bR) := \langle D^{(2)}(\bR)\rangle \bT, 
\end{equation}
where the $(j=2)$-irreducible representation matrix $D^{(2)}(\bR)$ acts explicitly by conjugation as follows
\begin{equation}
D^{(2)}(\bR)\bT = \bR \bT \bR\transp .
\label{eq:conjugation}
\end{equation}

It is worth noticing that the 5 $\times$ 5 matrices $D^{(2)}(\bR)$, defined by \eqref{eq:conjugation}, 
yield an irreducible real \emph{orthogonal} representation of $\Otre$ and as such they satisfy 
\begin{equation}
D^{(2)}(\bR)\transp = D^{(2)}(\bR\transp) = D^{(2)}(\bR^{-1})= D^{(2)}(\bR)^{-1}.
\label{eq:D(R)orthogonality} 
\end{equation}
\comm{Since $\Otre$ is the direct product of $\essotre$ and $C_i$, each representation of $\essotre$ splits into two representations of $\Otre$. However, in our application the natural generalisation of \eqref{eq:conjugation} to the $j^{th}$-rank tensors induces us to choose the following representation for the inversion: $D^{(j)}(\iota)= (-1)^j D^{(j)}(\bI)$. Hence, when $j=2$, we obtain $D^{(2)}(\iota) = D^{(2)}(\bI)$.}

It is convenient for the sake of the presentation to introduce an orthonormal basis that describes 
the orientation of the molecule in the 5-dimensional space $\V$. It is natural to build this basis 
on top of the three dimensional orthonormal frame $(\mm_1,\mm_2,\mm_3)$. Therefore, in agreement 
with~\cite{07Rosso,47,48}\footnote{In Ref.~\cite{07Rosso} definitions of $\bM_3$ and $\bM_4$ are swapped}, 
we define 
\begin{subequations}
\begin{align}
\bM_0 & = \sqrt{\frac{3}{2}} \left(\mm_3 \tp \mm_3 - \frac{1}{3}\bI \right) , &
\bM_1 & = \frac{1}{\sqrt{2}} \left(\mm_1 \tp \mm_1 - \mm_2 \tp \mm_2 \right), \\
\bM_2 & = \frac{1}{\sqrt{2}} \left(\mm_1 \tp \mm_2 + \mm_2 \tp \mm_1 \right), &
\bM_3 & = \frac{1}{\sqrt{2}} \left(\mm_2 \tp \mm_3 + \mm_3 \tp \mm_2 \right), \\
\bM_4 & = \frac{1}{\sqrt{2}} \left(\mm_1 \tp \mm_3 + \mm_3 \tp \mm_1 \right).
\end{align}
\label{eq:molorienttensors}
\end{subequations}
The tensors $\{\bM_0,\bM_1,\ldots,\bM_4\}$ are orthonormal with respect to the standard scalar product 
\eqref{eq:dot_tensors}.

Similarly, we define the basis of five symmetric, traceless tensors ${\bL_0,\ldots, \bL_4}$ in terms of 
${\bl_1, \bl_2,\bl_3}$ that are used as laboratory frame of reference. The matrix $D^{(2)}(\bR)$ is 
an irreducible representation of the rotation $\bR$ in the 5-dimensional space of symmetric traceless tensors. 
Specifically, it describes how the ``molecular axis'' $\bM_i$ of a given molecule is rotated with respect 
the laboratory axis:
\begin{equation}
D^{(2)}(\bR) \bL_i = \bM_i, \qquad i = 0,\dots,4,
\end{equation}
to be compared with the similar expression $\bR \bl_i = \mm_i \allowbreak \, (i=1,\dots,3)\,$ 
in the three-dimensional space. Likewise,
components of $D^{(2)}(\bR)$ and Eq.~\eqref{eq:rotation_matrix} become
\begin{equation}
D^{(2)}(\bR)_{ij} = \bL_i \cdot \bM_j , \qquad 
D^{(2)}(\bR) = \sum_{i,j=0}^{4} D^{(2)}(\bR)_{ij} \bL_i \qp \bL_j , \qquad
D^{(2)}(\bR) = \sum_{k=0}^{4} \bM_k \qp \bL_k,
\end{equation}
\comm{where $i,j = 0,\dots,4$}, and 
\comm{$\qp$ stands for the tensor dyadic product (cf. Sec.~\ref{sec:notations}, \eqref{sec:tensordiad}.)} 

Hence, the Cartesian components of the ordering tensor $\bS \in L(\V)$ are \comm{defined as}
\begin{equation}
S_{ij} = \bL_{i} \cdot \bS(\bL_j) = \bL_{i} \cdot \langle \bM_j \rangle,\qquad i,j= 0,\dots,4,
\label{eq:Sij}
\end{equation}
\comm{with the usual notation for the ensemble average (cf. Sec.~\ref{sec:notations}, \eqref{sec:ensemble}.)}
The components \eqref{eq:Sij} give the averaged molecular
direction $\bM_j$ with respect to the laboratory axis $\bL_i$. In general there are 25 independent entries (as expected).
We can alternatively write
\begin{equation}
\bS(\bL_i) = \langle \bM_i \rangle, \,(i = 0,1,\ldots,4)\qquad 
\bS = \sum_{i,j=0}^{4} S_{ij} \bL_{i} \qp \bL_j, \qquad 
\bS = \sum_{k=0}^{4} \langle \bM_k \rangle \qp \bL_k \, .
\end{equation}
In simulations, the orientational probability density $f(\bR)$ is reconstructed by keeping track of the orientations
of a large number $N$ of sample molecules. Thus, the ensemble average in \eqref{eq:Sij} is approximated by the sample 
mean of $\bM_j$ and the components $S_{ij}$ are calculated as
\begin{equation}
S_{ij} = \frac{1}{N}\sum_{\alpha=1}^{N} \bL_{i} \cdot \bM^{(\alpha)}_j ,\qquad i,j= 0,\dots,4,
\end{equation}
where the index $\alpha$ runs over all the molecules in the simulation. 

As a final example, let us consider a system of uniaxial molecules with long axis $\mm_3$. For symmetry reasons,
all the averages of the molecular orientational tensors $\bM_j$ with $j\neq 0$ vanish. The average of $\bM_0$ yields 
the five components ($i = 0,1,\ldots,4$)

\begin{equation}
S_{i, 0} = \sqrt{\frac{3}{2}} \, \bL_{i} \cdot \langle \mm_3 \tp \mm_3 - \tfrac{1}{3}\bI \rangle,
\label{eq:S00}
\end{equation}
which provides a description of the molecular order equivalent to the standard de Gennes $\bQ$ tensor 
\comm{or Saupe ordering matrix \cite{LuckhurstBook,1995deGennes}, as given in Eq.~\eqref{eq:Quniaxial}}.

\subsection{Change of basis and group action}
\label{sec:second-rankOPs_basis}
Let us investigate how the components $S_{ij}$ are affected by a change of the molecular or laboratory frames of reference.
We first consider the components of the ordering tensor with respect to rotated \emph{laboratory axes}. More precisely, 
let $\bl'_i=\bA_{P} \bl_i$ be the unit vectors  along the primed axes, obtained from the old ones by a (proper or improper) 
rotation $\bA_P$. A molecule, whose orientation was described by the rotation $\bR$ with respect to $(\bl_1,\bl_2,\bl_3)$, 
now is oriented as $\bR\bA_{P}\transp$ with respect to the primed frame
\begin{align*}
\bR \bl_i = \mm_i = \bR' \bl'_i = \bR' \bA_P \bl_i \qquad \Rightarrow \qquad \bR'=\bR \bA_{P}\transp .
\end{align*}
Correspondingly, in the five-dimensional space $\V$, the rotation that brings the new frame $\{\bL'_{i}\}$ into coincidence
with the molecular frame $\{\bM_i\}$ is
\begin{equation}
D^{(2)}(\bR\bA_{P}\transp) = D^{(2)}(\bR)D^{(2)}(\bA_{P})\transp, \qquad
D^{(2)}(\bR\bA_{P}\transp)\bL'_i = \bM_i, 
\end{equation}
where we have used the identities $D^{(2)}(\bR)\bL_i = \bM_i$ and $D^{(2)}(\bA_{P})\bL_i = \bL'_i$. The components of the
ordering tensor in the new basis are then calculated as follows
\begin{align}
S'_{i j} & = \bL'_{i} \cdot \langle\bM_{j} \rangle 
= \bL'_{i} \cdot \langle D^{(2)}(\bR\bA_{P}\transp) \rangle \bL'_j
= D^{(2)}(\bA_{P})\bL_{i} \cdot \langle D^{(2)}(\bR) \rangle \bL_j \notag \\
& = \sum_{k=0}^{4} D^{(2)}(\bA_{P})\bL_{i} \cdot (\bL_{k} \qp \bL_{k})\langle D^{(2)}(\bR) \rangle \bL_j 
= \sum_{k=0}^{4} D^{(2)}_{ik}(\bA_{P}\transp) S_{kj},
\label{eq:changebasis}
\end{align}
where $D^{(2)}_{ik}(\bA_{P}\transp) = \bL_{k} \cdot D^{(2)}(\bA_{P})\bL_{i}$. Since only the relative orientation 
of the molecular
frame with respect to the laboratory frame is important, a rigid rotation of \emph{all} the molecules is equivalent
to an inverse
rotation of the laboratory frame. It is easy to check that the ordering tensor in the two cases is the same.
Let $\bA_M$ be a common 
rotation to all the molecular frames, so that the new molecular axes are $\mm'_i = \bA_M \mm_i$. 
The orientation of these axes with
respect to the laboratory frame is given by $\bA_{M}\bR$: $\bA_{M}\bR\bl_i = \bA_{M}\mm_i = \mm'_i$.
The components of the ordering 
tensor then becomes
\begin{align}
S'_{i j} & = \bL_{i} \cdot \langle\bM'_{j} \rangle 
= \bL_{i} \cdot D^{(2)}(\bA_{M})\langle D^{(2)}(\bR) \rangle \bL_j \notag \\
& = \sum_{k=0}^{4} \bL_{i} \cdot D^{(2)}(\bA_{M})(\bL_{k} \qp \bL_{k})\langle D^{(2)}(\bR) \rangle \bL_j
= \sum_{k=0}^{4} D^{(2)}_{ik}(\bA_{M}) S_{kj},
\label{eq:changebasis_mol}
\end{align}
to be compared with \eqref{eq:changebasis}. When combined together, \eqref{eq:changebasis} 
and \eqref{eq:changebasis_mol} show 
that only the relative rotation $\bA_{M}\bA^{T}_{P}$ has a physical meaning.

However, in this context a rotation of the molecular frame has to be interpreted as an orthogonal 
transformation of the molecular 
axes \emph{before} the orientational displacement of the molecule, $\bR$, has taken place. 
In such a case the overall rotation that 
brings the laboratory axes into coincidence with the new molecular axes is described by 
the product $\bR\bA_M$. The new components 
of the ordering tensor are then given by
\begin{align}
S'_{i j} & = \bL_{i} \cdot \langle D^{(2)}(\bR\bA_{M}) \rangle \bL_j 
= \sum_{k=0}^{4} \bL_{i} \cdot \langle D^{(2)}(\bR) \rangle (\bL_{k} \qp \bL_{k}) D^{(2)}(\bA_{M}) \bL_j
= \sum_{k=0}^{4} S_{ik} D^{(2)}_{kj}(\bA_{M}) \, .
\label{eq:changebasis_mol2}
\end{align}

When both laboratory and molecular transformations are allowed, the combination of \eqref{eq:changebasis}
and \eqref{eq:changebasis_mol2} yields
\begin{equation}
S'_{i j} = \sum_{h,k=0}^{4} D^{(2)}_{ih}(\bA_{P}\transp) S_{hk} D^{(2)}_{kj}(\bA_{M}) 
\label{eq:changebasis_tot}
\end{equation}

Dually, we can study the action of two groups $G_P, G_M \subset \Otre$ on $\bS \in \V \otimes \V^*$ by left and right 
multiplication respectively (i.e., $G_P$ acts on the ``phase index'' and $G_M$ on the ``molecular index''). 
According to this \emph{active} interpretation of the orthogonal transformations $\bA_P\in G_P$ and $\bA_M \in G_M$, 
the ordering tensor is transformed in such a way that the following diagram commutes
\begin{center}
\includegraphics[scale=1]{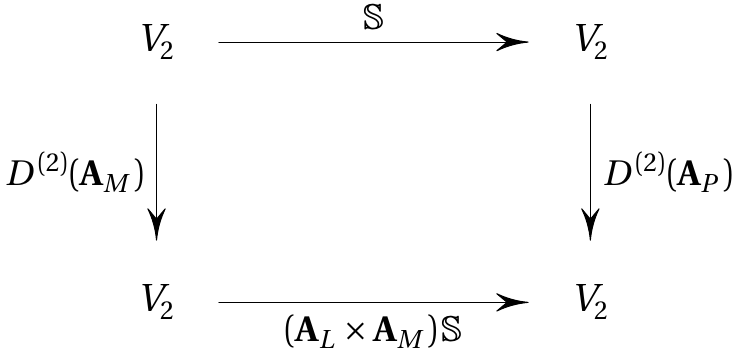} 
\end{center}

\noindent In formula, we have
\begin{align}
(\bA_{P} \times \bA_{M})\,\bS = D^{(2)}(\bA_{P})\, \bS \, D^{(2)}(\bA_{M}\transp) .
\label{eq:activeaction}
\end{align}

\subsection{Molecular and phase symmetry}
\label{sec:second-rankOPs_symmetry}
When dealing with liquid crystals we must distinguish between the symmetry of the molecule and the symmetry of the phase,
shared by the aggregation of the molecules, but not necessarily by the molecules themselves. 
\comm{A thorough description of the symmetries of a  physical system is envisaged 
by the symmetries of the corresponding orientational probability density $f(\bR)$.
However, when we analyse the order of the system only in terms of the descriptor $\bS$, 
some degeneracy arises. The ordering tensor $\bS$ is an averaged quantity obtained by computing 
the second moments of $f(\bR)$. It is therefore possible that systems possessing different physical 
symmetries may be described by the same ordering tensor, since in the averaging procedure some information may be lost.
This means that to be able to distinguish the fine details of these degenerate cases we need to carry on the expansion
of $f(\bR)$ to higher orders. Here, we mainly focus on second-rank properties and we first define what we mean 
by symmetry group of $\bS$.}

According to the action \eqref{eq:activeaction}, \comm{we define the \emph{second-rank molecular symmetry group}
as the set of all elements in $\Otre$ that fix $\bS$ under right multiplication} \footnote{Some authors, 
especially in the mathematical literature, use the term \emph{isotropy group}. Here we prefer to adopt 
the term \emph{symmetry group} \comm{or \emph{stabiliser}} to avoid confusion 
with the term ``isotropy'' as used, e.g., for the isotropic phase, which is a totally different thing.}
\begin{equation}
G_{M}(\bS) = \{\bA_M \in \Otre \suchcol \bS \, D^{(2)}(\bA_{M}\transp) = \bS \},
\end{equation}
and similarly for the \comm{second-rank phase symmetry group, where the multiplication appears on the left}
\begin{equation}
G_{P}(\bS) = \{\bA_P \in \Otre \suchcol D^{(2)}(\bA_{P})\, \bS = \bS \}.
\end{equation}
\comm{We also refer to these subgroups as the right and left stabiliser subgroup for $\bS$. 
A \emph{second-rank symmetry group} or \emph{stabiliser subgroup} is then defined as the subgroup 
of $\Otre \times \Otre$ that collects all the orthogonal transformations, both in the phase and in the molecule,
that leave $\bS$ invariant }(see \cite{ForteVianello} for the corresponding definition in the context of Elasticity Theory).
Mathematically, this is the direct product of $G_P$ and $G_M$: $G(\bS) = G_P(\bS) \times G_M(\bS)$.

The definition of symmetry group explicitly contains the information about the symmetry axes of the molecule and the phase. 
However, our main interest in this Section lies in classifying second-rank ordering tensors with respect 
to their symmetry properties. This means that we wish to introduce, among such tensors, a relation based on the idea
that different materials which can be rotated so that their symmetry groups become identical are `equivalent'.
For instance, two uniaxially aligned liquid crystals are viewed as equivalent in this respect even if the direction 
of alignment of the molecules may be different in the two compounds. Therefore, it is quite natural to think of ordering 
tensors lying on the same $\essotre$--orbit as describing the same material albeit possibly with respect 
to rotated directions.
As a consequence of the definition of the stabilisers $G_{M}(\bS)$ and $G_{P}(\bS)$, the symmetry groups 
with respect to a \emph{rotated} 
frame of reference are simply obtained by conjugation. \comm{For example, if $\bR_{P} \in \essotre$ is
a rotation of the laboratory axes, 
the new ordering tensor is $D^{(2)}(\bR_{P})\,\bS$ and the new symmetry group is conjugated through
$D^{(2)}(\bR_{P})$ to $G_{P}(\bS)$
\begin{align}
G_{P}\big(D^{(2)}(\bR_{P})\,\bS\big) 
& = \{\bA_P \in \Otre \suchcol D^{(2)}(\bA_{P})D^{(2)}(\bR_{P})\, \bS = D^{(2)}(\bR_{P})\,\bS \} \notag \\
& = \{\bA_P \in \Otre \suchcol D^{(2)}(\bR_{P})\transp D^{(2)}(\bA_{P}) D^{(2)}(\bR_{P})\, \bS = \bS \} \notag \\
& = D^{(2)}(\bR_{P}) G_{P}(\bS) D^{(2)}(\bR_{P})\transp.
\end{align}
}
We regard two ordering tensors $\bS_1$ and $\bS_2$ which are related by a rotation of the axes as representing 
the same material 
and hence equivalent. Thus, we speak about (second-rank) \emph{symmetry classes} and say that the two ordering 
tensors belong 
to the same symmetry class (and are therefore equivalent) when their \comm{stabiliser subgroups} are conjugate. 
More precisely,
we write $\bS_1 \sim \bS_2$ if and only if there exist two rotations $\bR_{P}, \bR_{M} \in \essotre$ such that 
\begin{equation}
D^{(2)}(\bR_{P}) G_{P}(\bS_1) D^{(2)}(\bR_{P})\transp = G_{P}(\bS_2) \qquad \text{ and } \qquad
D^{(2)}(\bR_{M}) G_{M}(\bS_1) D^{(2)}(\bR_{M})\transp = G_{M}(\bS_2).
\end{equation}

\comm{Finally, we say that the point groups $G_1$ and $G_2$ are (second-rank) \emph{indistinguishable symmetries 
for the physical system} if, for any two probability densities $f_1$ and $f_2$ that are fixed by $G_1$ and
$G_2$\footnote{Since $G_1,G_2 \subset \Otre$, a correct definition would require $f_1,f_2:\Otre \to \R_+$. 
This point is discussed in the Appendix, to avoid diverting here from the main discourse.}, respectively, 
the corresponding ordering tensors $\bS_1$ and $\bS_2$ belong to the same symmetry class ($\bS_1 \sim \bS_2$). }

\subsection{Symmetry classes of second-rank ordering tensors}
\label{sec:second-rankOPs_symmetryclasses}
A preliminary problem which we need to address is counting and determining all symmetry classes for second-rank 
ordering tensors.
An analogous problem in Elasticity, i.e., determining the symmetry classes of the linear elasticity tensor,
is discussed in
\cite{ForteVianello}. However, as we shall see, in our case this determination is simpler because we are dealing with
\emph{irreducible second-rank} tensors (instead of reducible fourth-rank) and the action of molecular and phase symmetry 
can be studied separately.

It is worth remarking that there is not a one to one correspondence between the \comm{stabiliser subgroups 
for $\bS$ and the point groups in three dimensions: liquid crystal compounds possessing different (molecular or phase) 
physical symmetry may have the same stabiliser for $\bS$ and thus may belong to the same second-rank symmetry class.}
This fact is related to the truncation of the probability density used to define the order parameters: 
at the second-rank level the ordering tensors may coincide even if the actual material symmetry is different 
as in the truncation process some information about the molecular distribution is lost. For example, at the 
second rank level, materials with $C_{3h}$ and $D_{\infty h}$ symmetry are effectively indistinguishable. 
This would not be true if we considered third-rank tensorial properties.
However, for the sake of simplicity and also because it is most widely adopted in the literature, we will 
consider only second-rank order parameters. 

A number of authors have made the same classification based on the number of non-vanishing independent order parameters
for each group \cite{1985zannoni,2001luckhurst,LuckhurstBook,2006Mettout}. However, in our view, this classification 
of the symmetry classes rests on two standard theorems, which we now state without proof. The first theorem is known 
as Hermann-Herman' theorem in Crystallography. The interested reader can consult the original 
references~\cite{1934Hermann,1945HermanB}; Refs. \cite{2004Slawinski,2001Wadhawan} for a proof and
Refs.~\cite{1987Wadhawan,1982Sirotin,1998Handbook} for a more accessible account of this result.

\begin{theorem}[Hermann-Herman] Let $\,\bTT\,$ be an $r$-rank ($r>0$) tensor in $W^{\tp r}$, where $W$ is a 
\emph{3-dimensional real}
vector space. If $\,\bTT\,$ is invariant with respect to the group $C_n$ of $n$-fold rotations about a
fixed axis and $n>r$, 
then it is \comm{$C_{\infty}$}-invariant relative to this axis (i.e., it is $C_{m}$-invariant for all $m \geq n$).
\label{thm:Herma}
\end{theorem}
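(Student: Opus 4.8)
The plan is to diagonalise the one-parameter group of rotations about the given axis via a Fourier (weight) decomposition of $W^{\tp r}$, and to observe that $C_n$-invariance with $n>r$ kills every nonzero weight. First I would complexify, setting $W_{\CC}=W\tp\CC$, and note that the rotation $\bR_{\theta}$ through angle $\theta$ about the fixed axis acts on $W_{\CC}$ with eigenvalues $1,\,e^{\mathrm{i}\theta},\,e^{-\mathrm{i}\theta}$: the eigenvector for $1$ is the axis direction, and those for $e^{\pm\mathrm{i}\theta}$ are the two circular combinations of an orthonormal pair spanning the plane perpendicular to the axis. Taking the $r$-fold diagonal action $D(\bR_{\theta})$ on $W_{\CC}^{\tp r}$ in the sense of \eqref{eq:diagonal_action}, an elementary tensor built from $p$ factors of weight $+1$, $q$ factors of weight $-1$ and $r-p-q$ factors of weight $0$ is an eigenvector with eigenvalue $e^{\mathrm{i}(p-q)\theta}$; since $|p-q|\le p+q\le r$, this produces a $\theta$-independent direct-sum decomposition $W_{\CC}^{\tp r}=\bigoplus_{k=-r}^{r}E_k$, where $E_k$ is the $e^{\mathrm{i}k\theta}$-eigenspace of $D(\bR_{\theta})$ for every $\theta$, and $E_0$ is exactly the subspace fixed by all rotations about the axis, i.e. by $C_{\infty}$.

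Next I would expand the given tensor as $\bTT=\sum_{k=-r}^{r}\bTT_k$ with $\bTT_k\in E_k$. The hypothesis that $\bTT$ is $C_n$-invariant amounts to invariance under the generator $\bR_{2\pi/n}$, and applying $D(\bR_{2\pi/n})$ componentwise and matching terms in the direct sum gives $e^{2\pi\mathrm{i}k/n}\,\bTT_k=\bTT_k$ for each $k$; hence $\bTT_k=0$ unless $n$ divides $k$. Because $n>r\ge|k|$ for every $k$ in the range $\{-r,\dots,r\}$, the only such $k$ divisible by $n$ is $k=0$, so $\bTT=\bTT_0\in E_0$. Since $E_0$ is fixed pointwise by $D(\bR_{\theta})$ for all $\theta$, the tensor $\bTT$ is $C_{\infty}$-invariant relative to the axis, which is the assertion. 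For the reality of the conclusion one notes that $\bTT_0$ equals the average $\int_{0}^{2\pi}D(\bR_{\theta})\bTT\,\tfrac{\dd\theta}{2\pi}$ (all harmonics with $k\ne0$ integrate to zero), which is manifestly real and $C_{\infty}$-invariant when $\bTT$ is real; so no spurious complexification has been introduced.

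I do not expect a genuine obstacle: the argument is essentially the statement that a $(2\pi/n)$-periodic function on the circle whose nonzero Fourier harmonics all have order at most $r<n$ must be constant. The only points requiring a little care are the bookkeeping in the first paragraph — that the eigenvalues of the $r$-fold diagonal tensor power of a planar rotation are precisely the $e^{\mathrm{i}k\theta}$ with $|k|\le r$, and that the resulting weight decomposition is $\theta$-independent — together with the final reality remark; both are routine linear algebra once the weight spaces have been identified.
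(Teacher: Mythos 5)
Your proof is correct. Note that the paper itself states the Hermann--Herman theorem \emph{without proof}, deferring to the cited references, so there is no in-paper argument to compare against; your weight-space (Fourier) decomposition is precisely the standard proof found in those sources. Every step checks out: the complexified rotation $\bR_\theta$ about the axis has eigenvalues $1, e^{\mathrm{i}\theta}, e^{-\mathrm{i}\theta}$, the diagonal action \eqref{eq:diagonal_action} on $W_{\CC}^{\tp r}$ therefore has weights $k$ with $|k|\le r$, invariance under the generator $\bR_{2\pi/n}$ forces each weight component $\bTT_k$ to vanish unless $n\mid k$, and $n>r\ge|k|$ leaves only $k=0$, i.e.\ the subspace fixed by the full circle of rotations. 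The closing remark identifying $\bTT_0$ with the rotational average $\tfrac{1}{2\pi}\int_0^{2\pi}D(\bR_\theta)\bTT\,\dd\theta$ correctly disposes of the complexification and confirms that the conclusion holds over $\RR$. The parenthetical claim of $C_m$-invariance for all $m\ge n$ is an immediate consequence of the $C_\infty$-invariance you establish, so nothing is missing.
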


Quoting Herman, from \cite{2004Slawinski,1945HermanB} ``If the medium has a rotation axis of symmetry $C_n$ of order $n$,
it is axially isotropic relative to this axis for all the physical properties defined by the tensors of the rank 
$r=0,1,2,\ldots,(n-1)$.''

\comm{
To perform the classification of the second-rank symmetry classes, we also need to recall 
a standard classification theorem in Group Theory \cite{ForteVianello,Sternberg,1972Bredon}.
\begin{theorem} Every closed subgroup of $\essotre$ is isomorphic to exactly one of the following groups ($n\geq 2$): 
$C_1$, $C_n$, $D_n$, $T$, $O$, $I$, $C_{\infty}$, $C_{\infty v}$, $\essotre$.
\label{thm:closedsubgroups}
\end{theorem}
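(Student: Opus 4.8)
The statement is the classical structure theorem for closed subgroups of $\essotre$, and my plan is to reduce it to the classification of the \emph{finite} rotation groups by Lie-theoretic bookkeeping. By Cartan's closed-subgroup theorem, any closed subgroup $G \subseteq \essotre$ is an embedded Lie subgroup, and since $\essotre$ is compact $G$ is a compact Lie group; I would set $d = \dim G \in \{0,1,2,3\}$ and let $G_0$ denote the identity component. The two top-dimensional cases are immediate. If $d = 3$, then $G_0$ is an open subgroup of the connected group $\essotre$, hence $G_0 = \essotre$ and $G = \essotre$. If $d = 2$, the Lie algebra of $G$ would be a two-dimensional subspace, closed under the bracket, of the Lie algebra of $\essotre$, which is $\RR^3$ under the cross product; but the cross product of two independent vectors spanning a plane is orthogonal to that plane, so no two-dimensional subalgebra exists and $d = 2$ cannot occur.

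For $d = 1$, the connected one-dimensional compact group $G_0$ is a circle, realised inside $\essotre$ as the group $\essodue$ of all rotations about some fixed axis $\mathbf n$. Since $G_0$ is normal in $G$, the group $G$ lies in the normaliser of $G_0$ in $\essotre$, which is precisely the copy of $\Odue$ consisting of the rotations about $\mathbf n$ together with the $\pi$-rotations about axes orthogonal to $\mathbf n$ (an element of $\essotre$ normalises $G_0$ iff it fixes the line spanned by $\mathbf n$). As $[\Odue:\essodue]=2$, the only subgroups of this $\Odue$ containing $\essodue$ are $\essodue$ and $\Odue$, so $G$ is isomorphic to $C_\infty$ or to $C_{\infty v}$.

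The substance of the proof is the case $d = 0$, i.e. $G$ finite — the classical determination of the finite subgroups of $\essotre$. If $|G| = 1$ then $G \cong C_1$, so I assume $N := |G| \ge 2$. Let $P \subseteq S^2$ be the set of poles, the unit vectors fixed by some non-identity element of $G$; then $G$ acts on $P$. Counting incidences between the $N-1$ non-identity elements (each fixing exactly two poles) and the poles, and grouping $P$ into $G$-orbits with stabiliser orders $r_1,\dots,r_k$, yields $\sum_{i=1}^{k}\bigl(1 - 1/r_i\bigr) = 2 - 2/N$. Each summand lies in $[1/2,1)$ and the right-hand side in $[1,2)$, so $k \in \{2,3\}$. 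If $k = 2$, then $1/r_1 + 1/r_2 = 2/N$ with $r_i \le N$ forces $r_1 = r_2 = N$, and $G$ is the cyclic group $C_N$ of rotations about a single axis. If $k = 3$, ordering $r_1 \le r_2 \le r_3$, the only solutions of $1/r_1 + 1/r_2 + 1/r_3 = 1 + 2/N$ are $(2,2,N/2)$, $(2,3,3)$, $(2,3,4)$, $(2,3,5)$, giving the dihedral group $D_{N/2}$ and the exceptional orders $N = 12, 24, 60$; inspecting the action on the poles identifies the abstract types $T \cong A_4$, $O \cong S_4$, $I \cong A_5$. Conversely, $C_n$, $D_n$ ($n \ge 2$), $T$, $O$, $I$ are realised by the familiar rotation groups described in Sec.~\ref{sec:notations}, so each listed group does occur.

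Finally I would check that the nine families are pairwise non-isomorphic for $n \ge 2$: orders separate most pairs, and the residual coincidences in order are excluded by elementary invariants ($D_2 \cong \ZZ_2 \times \ZZ_2$ is not cyclic; $A_4$ has no subgroup of order $6$, so $D_6 \not\cong T$; $D_{12}$ and $D_{30}$ contain elements of order $12$, resp. $30$, which $S_4 \cong O$ and $A_5 \cong I$ do not; and $\essotre$, $C_\infty \cong \essodue$, $C_{\infty v} \cong \Odue$ are distinguished by dimension and commutativity). The main obstacle is the finite case — specifically, setting up the pole-counting identity correctly and then running the Diophantine case analysis cleanly — whereas the reduction by dimension is soft Lie-theoretic bookkeeping.
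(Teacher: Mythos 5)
The paper does not actually prove this statement: it is introduced with ``two standard theorems, which we now state without proof'' and is delegated to the cited references (Forte--Vianello, Sternberg, Bredon). So there is no in-paper argument to compare against; your proposal supplies the standard proof that those references contain, and it is correct. The skeleton is sound: Cartan's theorem makes $G$ a compact Lie subgroup; the dimension count disposes of $d=3$ and $d=2$ (the non-existence of a two-dimensional subalgebra of $(\R^3,\times)$ is exactly the right observation); for $d=1$ the identification of the normaliser of a circle subgroup as the $\Og{2}$-copy $\{g : g\bn = \pm\bn\}$ and the index-$2$ argument correctly yield $C_\infty$ and $C_{\infty v}$ (the latter being, as the paper's notation section notes, the abstract group $\Og{2}$, realised inside $\essotre$ as $D_\infty$); and the pole-counting identity $\sum_i(1-1/r_i)=2-2/N$ with the ensuing Diophantine analysis is the classical route to $C_N$, $D_{N/2}$, $T$, $O$, $I$. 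Two steps are asserted rather than argued and would need to be expanded in a complete write-up: that the orbit signature $(2,2,N/2)$ forces the dihedral group (one must exhibit the index-$2$ cyclic stabiliser of a pole in the small orbit and the inverting half-turns), and that the signatures $(2,3,3)$, $(2,3,4)$, $(2,3,5)$ determine the groups up to isomorphism as $A_4$, $S_4$, $A_5$ (``inspecting the action on the poles'' hides the real work of reconstructing the polyhedral orbit). These are standard and fixable, not gaps in the strategy. Your closing check that the listed groups are pairwise non-isomorphic is a point the word ``exactly'' in the statement genuinely requires and is often omitted; including it is a merit of your version.
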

}
In view of these theorems, for $j=2$ we obtain a result that allows collecting all point groups in five classes, 
which greatly simplifies the classification of phase or molecular symmetries.

\comm{\begin{proposition} There are exactly five (phase or molecular) symmetry-classes of the second-rank ordering
tensor:
\emph{Isotropic, Uniaxial (Transverse Isotropic), Orthorhombic, Monoclinic and Triclinic}. The corresponding stabiliser
subgroups for $\bS$ are: $O(3)$, $D_{\infty h}$, $D_{2 h}$, $C_{2 h}$ and $C_i$. The last column in the table collects the second-rank indistinguishable symmetries, i.e., 
physical symmetries that yield equivalent second-rank ordering tensors.
\begin{center}
\includegraphics[scale=0.8]{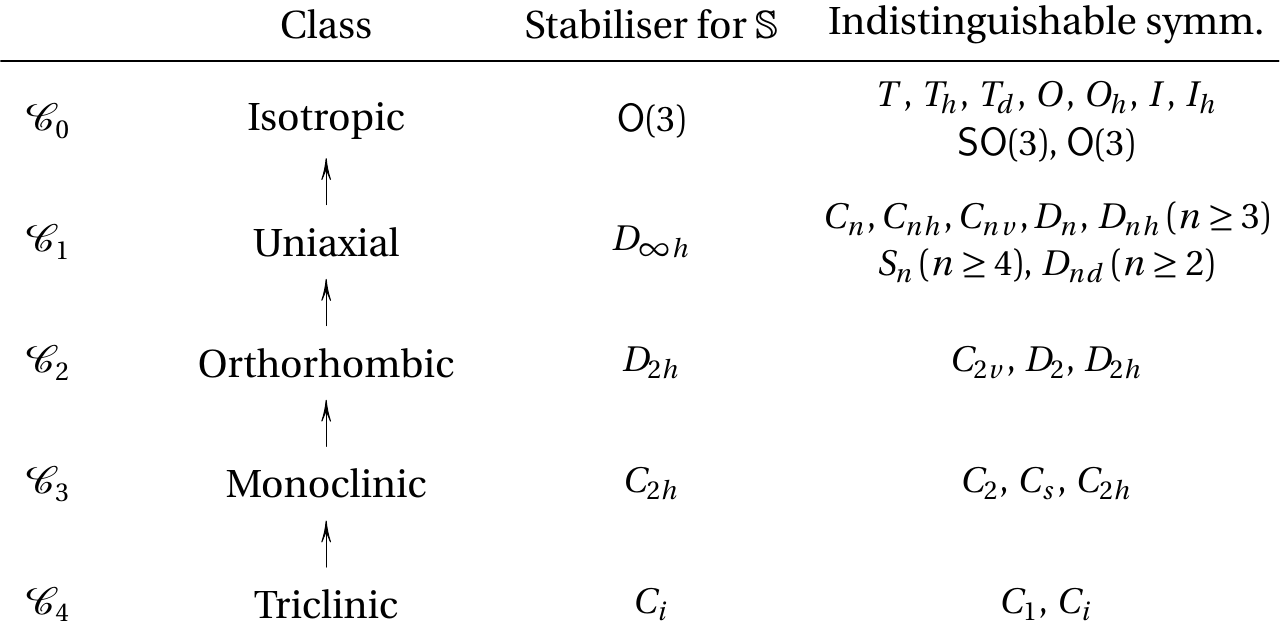}
\end{center}
Furthermore, since $\bS$ vanishes in the isotropic class, there are only $4 \times 4 + 1 = 17$ possible 
different combinations of molecular and phase symmetries that can be distinguished at the level of second-rank
order parameters.
\end{proposition}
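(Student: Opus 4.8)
The plan is to classify the closed subgroups that can occur as a left (or right) stabiliser of a second-rank tensor $\bS \in L(\V)$, i.e.\ subgroups $H \subseteq \Otre$ that fix some nonzero $\bS$ under the representation $D^{(2)}$. Since the action on the phase index and on the molecular index are governed by the same representation $D^{(2)}$ of $\Otre$ on the five-dimensional space $\V$ (acting on $L(\V)$ only enlarges to $D^{(2)} \otimes D^{(2)\,*}$, but the key structural facts come from $D^{(2)}$ on $\V$), and since $G(\bS) = G_P(\bS) \times G_M(\bS)$ splits, it suffices to treat $G_P$ and $G_M$ separately and identically. So the core question reduces to: \emph{which closed subgroups of $\Otre$ are stabilisers of some element of $\V$?}

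First I would note that every such $H$ is the intersection of $\Otre$-stabilisers of individual deviatoric tensors, so it is enough to understand the stabiliser of a single symmetric traceless $3\times 3$ matrix $\bQ$, then take intersections. Diagonalising $\bQ$ in an orthonormal eigenbasis, there are exactly three cases by the eigenvalue multiplicities: (i) $\bQ = 0$, stabiliser $\Otre$; (ii) $\bQ$ has a double eigenvalue (uniaxial), stabiliser $D_{\infty h}$ — the group of orthogonal maps preserving the distinguished axis, which one checks contains $C_\infty$, $\sigma_h$, $\sigma_v$, and $\iota$ because $D^{(2)}(\iota) = D^{(2)}(\bI)$ as recorded after Eq.~\eqref{eq:D(R)orthogonality}; (iii) $\bQ$ has three distinct eigenvalues (biaxial), stabiliser $D_{2h}$, the group generated by the three coordinate $\pi$-rotations together with $\iota$. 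These three are the maximal stabilisers; a general $H = G_P(\bS)$ is an intersection of conjugates of these (one factor $D^{(2)}(\bR)\bQ$ for each basis tensor $\bL_j$, via $\bS(\bL_j) = \langle \bM_j\rangle$). Taking intersections of conjugates of $\Otre$, $D_{\infty h}$, $D_{2h}$ inside $\Otre$ — and using Theorem~\ref{thm:closedsubgroups} to pin down the rotation parts, together with the fact that $\iota$ lies in every stabiliser (so all our groups are centrosymmetric, of the form $K \times C_i$) — yields only $\Otre$, $D_{\infty h}$, $D_{2h}$, $C_{2h}$, $C_i$. The non-obvious closures: an intersection of two distinct conjugates of $D_{\infty h}$ with non-parallel axes leaves at most the $\pi$-rotations fixing both axes plus $\iota$, landing in $D_{2h}$ or $C_{2h}$; an intersection yielding a lone $2$-fold rotation axis plus $\iota$ is $C_{2h}$; nothing with a $C_n$ for $n \ge 3$ survives as a proper stabiliser because by Theorem~\ref{thm:Herma} (Hermann–Herman with $r = 2$, $n \ge 3 > 2$) any $C_n$-invariant deviatoric tensor is already $C_\infty$-invariant, so there is no genuinely $3$-fold, tetrahedral, octahedral or icosahedral deviatoric tensor distinct from the uniaxial or isotropic ones — this is exactly why materials with e.g.\ $C_{3h}$ or cubic symmetry collapse to the Uniaxial or Isotropic class at second rank.

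Once the five possible stabilisers are identified for each of $G_P$ and $G_M$, the symmetry-class count is immediate: two tensors are equivalent iff their stabiliser pairs are conjugate (via $D^{(2)}$ of rotations), and conjugacy of closed subgroups of $\Otre$ is determined by the abstract group together with the geometric type above, so there are exactly five classes, with canonical stabilisers $\Otre$, $D_{\infty h}$, $D_{2h}$, $C_{2h}$, $C_i$; these are totally ordered by inclusion, which is what the lattice figure displays. The ``indistinguishable symmetries'' column is then filled in by running each point group from Theorems~\ref{thm:Herma}–\ref{thm:closedsubgroups} through its action on $\V$ and recording which of the five stabilisers it produces — Hermann–Herman does almost all the work here, forcing every group with an axis of order $\ge 3$ (or with multiple such axes) into the Uniaxial or Isotropic row. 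Finally, since $\bS = 0$ in the Isotropic class, the Isotropic value of the phase (resp.\ molecular) index is compatible only with the Isotropic value of the other, so the admissible $(G_P, G_M)$ pairs number $4\times 4$ among the non-isotropic classes plus the single $(\Otre,\Otre)$ pair, i.e.\ $17$.

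The main obstacle I expect is step two: showing rigorously that the \emph{only} closed subgroups arising as intersections of conjugates of $\{\Otre, D_{\infty h}, D_{2h}\}$ are the five listed — in particular ruling out spurious cyclic groups $C_n$ ($n\ge 3$) and the exceptional polyhedral groups as honest stabilisers, and checking that an intersection cannot produce something like $C_2$ \emph{without} $\iota$ or a $\sigma_h$-only group $C_s$. The clean way to dispatch this is to exploit once and for all that $D^{(2)}(\iota) = D^{(2)}(\bI)$, so every stabiliser contains $\iota$ and is centrosymmetric; then pass to rotation parts $H \cap \essotre$, apply Theorem~\ref{thm:closedsubgroups} to enumerate the candidates $C_1, C_n, D_n, T, O, I, C_\infty, C_{\infty v}, \essotre$, and use Hermann–Herman to kill everything containing a $\ge 3$-fold axis unless it already forces uniaxiality/isotropy — leaving only $C_1, C_2, D_2, C_\infty, C_{\infty v}, \essotre$ as rotation parts, hence $C_i, C_{2h}, D_{2h}, D_{\infty h}, \Otre$ (the case $C_{\infty}$-rotation-part-but-not-$C_{\infty v}$ also collapses to $D_{\infty h}$ once $\iota$ and the resulting $\sigma_h$ are adjoined). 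Writing down explicit canonical $\bS$ realising each of the five and verifying its stabiliser is exactly the claimed group then closes the argument and simultaneously supplies the canonical representations promised in the abstract.
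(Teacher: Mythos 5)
Your proof is correct, but it reaches the classification by a genuinely different route from the paper. The paper's proof works ``top down'': it combines the Hermann--Herman theorem (every $n$-fold axis with $n>2$ acts on a second-rank tensor like $C_\infty$), the identity $D^{(2)}(\iota\bR)=D^{(2)}(\bR)$, and the classification of closed subgroups of $\essotre$, and then sorts all point groups into classes by inspection. You instead work ``bottom up'' from the structure of the representation: since $D^{(2)}(\bA_P)\bS=\bS$ forces $\bA_P$ to fix every deviatoric tensor in the range of $\bS$, the left stabiliser is an intersection of $\Otre$-stabilisers of individual symmetric traceless matrices, and the spectral theorem says each such stabiliser is a conjugate of $\Otre$, $D_{\infty h}$ or $D_{2h}$; classifying intersections of these then yields exactly $\Otre$, $D_{\infty h}$, $D_{2h}$, $C_{2h}$, $C_i$. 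This is arguably cleaner for establishing \emph{which groups can occur as stabilisers}: it automatically excludes isolated $3$-fold axes and the polyhedral groups without invoking Hermann--Herman at all (a $2\pi/3$ rotation lies in a conjugate of $D_{\infty h}$ only if its axis is the distinguished one, in which case the whole $C_\infty$ is already present), and it makes the total ordering of the five stabilisers transparent. What it does not replace is the paper's second task --- filling in the ``indistinguishable symmetries'' column, i.e.\ determining which stabiliser a \emph{given} point group forces on $\bS$ --- and there you correctly fall back on Hermann--Herman plus case inspection, exactly as the paper does. One small slip: adjoining $\iota$ (and $\sigma_h$) to a $C_\infty$ rotation part gives $C_{\infty h}$, not $D_{\infty h}$; the correct reason $C_{\infty h}$ cannot occur as a stabiliser is that any $C_\infty$-invariant deviatoric tensor is uniaxial about that axis and hence already $D_{\infty h}$-invariant (equivalently, in your intersection framework, every factor containing $C_\infty$ about $\bn$ is $\essotre$ or $D_{\infty h}$ about $\bn$, so the intersection contains the perpendicular $\pi$-rotations too). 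This is a one-line repair and does not affect the conclusion.
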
}
\begin{proof} The proof is a consequence of the following remarks.
\begin{enumerate} 
\item[(1)] Since the symmetry group $G$ is the direct product of $G_M$ and $G_P$, the action of a symmetry 
transformation can be studied independently for the molecules and the phase. This greatly simplify the classification 
(by contrast, this fact is not true in Elasticity).
\item[(2)] The definition of $D^{(2)}(\bR)$, as given in \eqref{eq:conjugation}, involves exactly twice the product of the rotation matrix
$\bR$. This implies, by Herman's theorem, that all the groups with an $n$-fold rotation axis with $n>2$ are effectively
indistinguishable from \comm{the $C_{\infty}$-symmetry} about that axis.  

\item[(3)] Furthermore, \comm{Eq.\eqref{eq:conjugation} shows immediately that $D^{(2)}(\iota\bR) = D^{(2)}(\bR)$. 
In particular, this yields $D^{(2)}(\iota) = D^{(2)}(\bI)$, $D^{(2)}(\sigma_h) = D^{(2)}(C_{2z})$,
and $D^{(2)}(\sigma_v) = D^{(2)}(C_{2x})$ so that the inversion and the identity are represented by the same matrix; 
a horizontal mirror reflection is equivalent to a 2-fold rotation about the main axis $z$, and a vertical mirror reflection
is equivalent to a 2-fold rotation about an orthogonal axis $x$. Therefore, the classification can be first performed on 
the subgroups of $\essotre$ since each class will have at least one representative subgroup in $\essotre$. 
The indistinguishable subgroups of $\Otre$ are then classified by considering the trivial actions of $\iota$ or $\sigma_h$.}
\item[(4)] \comm{Whenever a point group has two independent rotation axes of order $n>2$, it must contain 
two distinct copies of $C_{\infty}$}. By checking the list of the possible closed subgroups of $\essotre$ 
(theorem~\ref{thm:closedsubgroups}) we see that 
it must be the whole $\essotre$.
\item[(5)] \comm{Finally, the stabiliser of $\bS$ in each class is determined by the largest group in the class.}

\comm{The proof is then completed by inspection of the various point groups. For example, point (4) 
immediately shows that the higher order groups $T$, $O$ and $I$ are all indistinguishable from $\essotre$ and 
thus they all belong to the same class. Then, from point (3) we learn that adding an inversion or a mirror
reflection has no effect on $\bS$. Hence we can also classify $I_h$  $T_h$, $T_d$, $O_h$, and $\Otre$ as belonging
to the same class (called \emph{isotropic}). The stabiliser for $\bS$ is the largest among such groups and it is $\Otre$.

Likewise, axial groups with a 3-fold or higher rotation axis are to be placed in the same class of $C_{\infty}
=\essodue$, according to the remark in point (2).} \comm{We now choose a frame of reference with the rotation 
axis along the $z$ coordinate and use the basis $\{\bL_i\}$, as given in Eq.\eqref{eq:molorienttensors}, 
to represent $\bS$. The only basis tensor that is fixed by an arbitrary rotation about the $z$-axis is $\bL_0$.
Therefore, in this basis a $C_{\infty}$-invariant $\bS$ must be written either as 
\[
\bS = \sum_{i=0}^{4} S_{i,0} \bL_{i} \kp \bL_{0}, \qquad \text{ or }  \qquad \bS = \sum_{j=0}^{4} S_{0,j} 
\bL_{0} \kp \bL_{j} ,                                                                                                                                                                                                                                                                                                                                                                                                                                                                                                                                                                                                                                                                                                                                                                                                                                                                                                                                 
                                                                     \]
depending on whether $C_{\infty}$ is acting on the right or the left. The only non-vanishing 
entries are either the first column or the first row of the matrix representing $\bS$ (see also Table \ref{tab:canonical}).
Since the tensor $\bL_0$ is not affected by a mirror reflection across planes through the $z$-axis, or by a $C_2$ rotation
about the $x$-axis, the ordering tensor $\bS$ is (right- or left-) fixed also by $\sigma_v$ and $C_{2x}$. This brings
$C_{\infty v}=\Odue$ and $D_{\infty h}$ into the class. From these it is then easy to classify all the other groups 
in the \emph{uniaxial} class.

By contrast, the \emph{orthorhombic} class and the \emph{monoclinic} class are only composed by groups
with 2-fold rotation axes, so that Theorem \ref{thm:Herma} does not apply. The two classes are distinguished 
by the presence or absence of a second rotation axis orthogonal to $z$ and point (3) allows us to identify 
the indistinguishable subgroups in each class. Finally, the \emph{triclinic} class collects the remaining 
trivial groups $C_1$ and $C_i$. All the symmetry classes are disjoint since it is possible to provide independent
examples of an ordering tensor in each class.}
\end{enumerate} 
\end{proof}

\section{Identification of the nearest symmetric ordering tensor}
\label{sec:identification}

\subsection{Invariant projection}
\label{sec:identification_projection}
\comm{If $G_M, G_P \subset \Otre$ represent the real symmetry of the material, the order parameter $\bS$
must be an invariant tensor for the action of $G_P \times G_M$. However, in practice, $\bS$ will not 
be fixed exactly by any non-trivial group, due to measurements errors or non perfect symmetry of the real system. 
Therefore, our problem can be stated as follows: given a measured 5 $\times$ 5 
ordering tensor $\bS$ and assuming two specific stabiliser subgroups $G_M, G_P$, find the ordering tensor $\bSs$
which is 
\mbox{$(G_P \times G_M)$--invariant} and is closest to $\bS$. We now discuss two related sub-problems, namely,
(1) how to find $\bSs$ and (2) what is meant by ``closest to $\bS$''.}

\comm{Let us define the fixed point subspace $L(\V)^G$ as the space of the tensors $\bS$ that are fixed under $G$:
\begin{equation}
L(\V)^G = \{\bS \in L(\V): g \bS=\bS, \forall g \in G\}.
\end{equation}
}
The invariant projection onto \comm{$L(\V)^{G_P \times G_M}$} can be easily obtained by averaging over the group
\begin{equation}
\bSs = \frac{1}{|G_P| |G_M|}\sum_{\bA_{P} \in G_P} \sum_{\bA_{M} \in G_M} D^{(2)}(\bA_{P})\, \bS \,D^{(2)}(\bA_{M}\transp) , 
\label{eq:Reynolds}
\end{equation}
where $|G_P|$ and $|G_M|$ are the orders of the two (finite) groups. We observe that the transposition of $\bA_M$ in
Eq.~\eqref{eq:Reynolds} is unnecessary since we are summing over the whole group, but it is maintained here for consistency
with \eqref{eq:activeaction}. This averaging procedure is standard in many contexts and takes different names accordingly.
It is called ``averaging over the group'' in Physics, ``projection on the identity representation'' in Group Theory and
``Reynolds operator'' in Commutative Algebra. The expression \eqref{eq:Reynolds} is manifestly invariant by construction.
It is also easy to show that it constitutes an \emph{orthogonal projection}\comm{, with respect to the standard 
Frobenius inner product, i.e., the natural extension of Eq.\eqref{eq:dot_tensors} to $L(\V)$}. Hence, the distance 
of $\bSs$ from the original
$\bS$ is minimal and can be easily computed, as we now discuss.

\begin{lemma} 
The Reynolds operator as given in \eqref{eq:Reynolds}, with $G_M, G_P \subset \Otre$, is an orthogonal projector
\comm{onto $L(\V)^{G_P \times G_M}$}
\end{lemma}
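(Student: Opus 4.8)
The plan is to verify the two defining properties of an orthogonal projector: (i) idempotence, $\bSs$ applied twice equals $\bSs$ (equivalently, the operator $\mathcal{P}$ sending $\bS \mapsto \bSs$ satisfies $\mathcal{P}^2 = \mathcal{P}$ and has image exactly $L(\V)^{G_P \times G_M}$), and (ii) self-adjointness with respect to the Frobenius inner product on $L(\V)$, i.e. $\bSs_1 \cdot \bS_2 = \bS_1 \cdot \bSs_2$ for all $\bS_1, \bS_2 \in L(\V)$. Throughout I write $\mathcal{P}(\bS) = \frac{1}{|G_P||G_M|}\sum_{\bA_P \in G_P}\sum_{\bA_M \in G_M} D^{(2)}(\bA_P)\,\bS\,D^{(2)}(\bA_M\transp)$ for the map defined by \eqref{eq:Reynolds}.

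First I would establish that $\mathcal{P}$ maps into $L(\V)^{G_P \times G_M}$ and restricts to the identity there. For the latter, if $\bS \in L(\V)^{G_P \times G_M}$ then $D^{(2)}(\bA_P)\,\bS\,D^{(2)}(\bA_M\transp) = \bS$ for every $(\bA_P,\bA_M) \in G_P \times G_M$ by definition of the fixed-point subspace, so each of the $|G_P||G_M|$ summands equals $\bS$ and the normalisation gives $\mathcal{P}(\bS) = \bS$. For the former, fix $\bB_P \in G_P$ and $\bB_M \in G_M$ and compute $(\bB_P \times \bB_M)\,\mathcal{P}(\bS) = D^{(2)}(\bB_P)\,\mathcal{P}(\bS)\,D^{(2)}(\bB_M\transp)$; using the group-homomorphism property of $D^{(2)}$ this absorbs into the sum as a relabelling $\bA_P \mapsto \bB_P\bA_P$, $\bA_M \mapsto \bB_M\bA_M$, which is a bijection of $G_P$ (resp. $G_M$) onto itself, so the sum is unchanged and $\mathcal{P}(\bS)$ is fixed. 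Idempotence $\mathcal{P}^2 = \mathcal{P}$ is then immediate: $\mathcal{P}(\bS) \in L(\V)^{G_P \times G_M}$, so applying $\mathcal{P}$ again returns it unchanged. This shows $\mathcal{P}$ is a projector onto $L(\V)^{G_P \times G_M}$.

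Next I would prove self-adjointness. The key input is \eqref{eq:D(R)orthogonality}: the matrices $D^{(2)}(\bA)$ are real orthogonal, so $D^{(2)}(\bA)\transp = D^{(2)}(\bA\transp) = D^{(2)}(\bA)^{-1}$, and this holds for all $\bA \in \Otre$, not merely $\essotre$. The adjoint (with respect to the Frobenius product on $L(\V)$) of the map $\bS \mapsto D^{(2)}(\bA_P)\,\bS\,D^{(2)}(\bA_M\transp)$ is $\bS \mapsto D^{(2)}(\bA_P)\transp\,\bS\,D^{(2)}(\bA_M\transp)\transp = D^{(2)}(\bA_P\transp)\,\bS\,D^{(2)}(\bA_M)$, using $(\bX\bY\bZ)\transp$-type manipulations inside the trace together with orthogonality. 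Hence the adjoint of $\mathcal{P}$ is $\bS \mapsto \frac{1}{|G_P||G_M|}\sum_{\bA_P,\bA_M} D^{(2)}(\bA_P\transp)\,\bS\,D^{(2)}(\bA_M)$, which under the bijective substitutions $\bA_P \mapsto \bA_P^{-1}$, $\bA_M \mapsto \bA_M^{-1}$ (these are automorphisms of the finite groups $G_P$, $G_M$) returns exactly $\mathcal{P}$. Thus $\mathcal{P}^* = \mathcal{P}$. A projector that is self-adjoint is an orthogonal projector, completing the proof; the distance-minimising property quoted in the surrounding text then follows from the standard fact that orthogonal projection onto a subspace realises the nearest point in that subspace.

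\textbf{Main obstacle.} None of the individual steps is deep; the one point requiring a little care is making sure the orthogonality relation \eqref{eq:D(R)orthogonality} is invoked correctly for \emph{improper} orthogonal transformations (inversions and mirror reflections), since $G_P$ and $G_M$ are subgroups of the full $\Otre$ rather than of $\essotre$. As noted in the excerpt, $D^{(2)}(\iota\bR) = D^{(2)}(\bR)$ and the representation is orthogonal on all of $\Otre$, so there is no genuine difficulty, but the write-up should state this explicitly rather than silently assuming properness. A secondary bookkeeping point is verifying that $\bA \mapsto \bB\bA$ and $\bA \mapsto \bA^{-1}$ genuinely permute $G_P$ and $G_M$ — this is elementary group theory but should be cited as the reason the sums are invariant under relabelling.
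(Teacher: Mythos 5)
Your proof is correct and follows essentially the same route as the paper's: idempotence via re-indexing the group sum, self-adjointness via the orthogonality relation \eqref{eq:D(R)orthogonality} and the substitution $g\mapsto g^{-1}$ (you in fact verify more carefully than the paper that the image is exactly $L(\V)^{G_P\times G_M}$). One cosmetic quibble: $\bA\mapsto\bA^{-1}$ and $\bA\mapsto\bB\bA$ are bijections of the group onto itself but not group automorphisms in general; bijectivity is all the relabelling argument needs.
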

\begin{proof} Let us introduce the linear operator $\Rey$ such that $\Rey(\bS) = \bSs$ as given in Eq.~\eqref{eq:Reynolds}.
For convenience, we will use the following more compact notation for the Reynolds operator:
\[\Rey (\bS) = \frac{1}{|G|} \sum_{g\in G} g\bS , \]
where $G$ is the direct product of groups $G=G_P \times G_M$ and $g= D^{(2)}(\bA_{P})\tp \,D^{(2)}(\bA_{M})$ 
is the tensor (Kronecker) product of the matrix representation. First, we observe that by construction $\Rey^2 = \Rey$ 
as we are summing over the whole group.

Next, we show that $\Rey\transp = \Rey$. Since $G_M, G_P \subset \Otre$, it follows from \eqref{eq:D(R)orthogonality}
that $g^{-1} = g\transp$. Therefore, for any two ordering tensors $\bS, \bTT \in L(\V)$, we have
\begin{align}
\Rey(\bTT) \cdot \bS 
= \frac{1}{|G|} \sum_{g\in G} g\bTT \cdot \bS 
= \frac{1}{|G|} \sum_{g\in G} \bTT \cdot g\transp\bS
=\frac{1}{|G|} \sum_{g\in G} \bTT \cdot g^{-1}\bS 
= \frac{1}{|G|} \sum_{g\in G} \bTT \cdot g\bS 
= \bTT \cdot \Rey(\bS),
\end{align}
where we have used the fact that summing over $g^{-1}$ is the same as summing over $g$, since $G$ is a group.
Finally, we define $\bS^{\perp}:= \bS-\Rey(\bS)$ and obtain
\begin{align}
\Rey(\bS) \cdot \bS^{\perp} = \Rey(\bS) \cdot \big(\bS-\Rey(\bS)\big) 
= \Rey(\bS) \cdot \bS- \Rey(\bS) \cdot \Rey(\bS) 
= \Rey(\bS) \cdot \bS- \Rey^2(\bS) \cdot \bS =0 .
\end{align}
\end{proof}

The above lemma suggests that the \comm{Frobenius} norm $\|\bS^{\perp}\|$ is a suitable candidate 
for the ``distance from the $G$-invariant subspace''. It is worth noticing explicitly that this distance is
properly defined as its calculation does not depend 
on the particular chosen matrix representation of $\bS$, i.e., on the molecular and laboratory axes.
Furthermore, since by orthogonality we have $\|\bS\|^2 = \|\bSs\|^2 + \|\bS^{\perp}\|^2$, 
we readily obtain the following expression for the distance
\begin{equation}
\mathrm{d}(\bS, \comm{L(\V)^{G_P \times G_M}}) = \|\bS^{\perp}\| = \sqrt{\phantom{\big(}\|\bS\|^2 - \|\bSs\|^2} .
\label{eq:distance}
\end{equation}

\subsection{Canonical matrix representation}
\label{sec:identification_canonical}
The matrix representation of the ordering tensor $\bS$ assumes a particular simple form when the molecule 
and the laboratory axes are chosen in accordance with the molecular and phase symmetry, respectively.
When the $z$-axis is assumed to be the main axis of symmetry and the basis tensors $\bM_i$ and $\bL_{j}$ 
are chosen accordingly, the projection \eqref{eq:Reynolds} gives a ``canonical'' form for the ordering tensor, 
shown explicitly in Table~\ref{tab:canonical}. The number of non-vanishing entries in each matrix corresponds
to the number of independent (second-rank) order parameters necessary to describe the orientational order of 
molecules of the given symmetry in a given phase. The same results can be obtained by direct computation using 
the invariance of each matrix entry by symmetry transformations (see for example the calculations at the end
of \cite{07Rosso} relative to a nematic biaxial liquid crystal).

\newcommand{\zz}{\cdot}
\begin{table}[h]
\begin{scriptsize}
{\renewcommand{\arraystretch}{1.1}  \tabcolsep=1pt
\begin{tabular}{|c|cccc|}
\hline
\backslashbox{Mol.}{Phase}
& Triclinic & Monoclinic & Orthorhombic & Uniaxial \\[2mm] \hline
&&&&\\[-1mm]
Triclinic
&
$\begin{pmatrix}
S_{00} & S_{01} & S_{02} & S_{03} & S_{04} \\
S_{10} & S_{11} & S_{12} & S_{13} & S_{14} \\
S_{20} & S_{21} & S_{22} & S_{23} & S_{24} \\
S_{30} & S_{31} & S_{32} & S_{33} & S_{34} \\
S_{40} & S_{41} & S_{42} & S_{43} & S_{44}
\end{pmatrix} $
&
$\begin{pmatrix}
S_{00} & S_{01} & S_{02} & S_{03} & S_{04} \\
S_{10} & S_{11} & S_{12} & S_{13} & S_{14} \\
S_{20} & S_{21} & S_{22} & S_{23} & S_{24} \\
\zz & \zz & \zz & \zz & \zz \\
\zz & \zz & \zz & \zz & \zz
\end{pmatrix} $
& 
$\begin{pmatrix}
S_{00} & S_{01} & S_{02} & S_{03} & S_{04} \\
S_{10} & S_{11} & S_{12} & S_{13} & S_{14} \\
\zz & \zz & \zz & \zz & \zz \\
\zz & \zz & \zz & \zz & \zz \\
\zz & \zz & \zz & \zz & \zz
\end{pmatrix} $
&
$\begin{pmatrix}
S_{00} & S_{01} & S_{02} & S_{03} & S_{04} \\
\zz & \zz & \zz & \zz & \zz \\
\zz & \zz & \zz & \zz & \zz \\
\zz & \zz & \zz & \zz & \zz \\
\zz & \zz & \zz & \zz & \zz
\end{pmatrix} $ \\[8mm]
Monoclinic
& 
$\begin{pmatrix}
S_{00} & S_{01} & S_{02} & \zz & \zz \\
S_{10} & S_{11} & S_{12} & \zz & \zz \\
S_{20} & S_{21} & S_{22} & \zz & \zz \\
S_{30} & S_{31} & S_{32} & \zz & \zz \\
S_{40} & S_{41} & S_{42} & \zz & \zz
\end{pmatrix} $
&
$\begin{pmatrix}
S_{00} & S_{01} & S_{02} & \zz & \zz \\
S_{10} & S_{11} & S_{12} & \zz & \zz \\
S_{20} & S_{21} & S_{22} & \zz & \zz \\
\zz & \zz & \zz & \zz & \zz \\
\zz & \zz & \zz & \zz & \zz
\end{pmatrix} $
& 
$\begin{pmatrix}
S_{00} & S_{01} & S_{02} & \zz & \zz \\
S_{10} & S_{11} & S_{12} & \zz & \zz \\
\zz & \zz & \zz & \zz & \zz \\
\zz & \zz & \zz & \zz & \zz \\
\zz & \zz & \zz & \zz & \zz
\end{pmatrix} $
&
$\begin{pmatrix}
S_{00} & S_{01} & S_{02} & \zz & \zz \\
\zz & \zz & \zz & \zz & \zz \\
\zz & \zz & \zz & \zz & \zz \\
\zz & \zz & \zz & \zz & \zz \\
\zz & \zz & \zz & \zz & \zz
\end{pmatrix} $ \\[8mm]
Orthorhombic
& 
$\begin{pmatrix}
S_{00} & S_{01} & \zz & \zz & \zz \\
S_{10} & S_{11} & \zz & \zz & \zz \\
S_{20} & S_{21} & \zz & \zz & \zz \\
S_{30} & S_{31} & \zz & \zz & \zz \\
S_{40} & S_{41} & \zz & \zz & \zz
\end{pmatrix} $
& 
$\begin{pmatrix}
S_{00} & S_{01} & \zz & \zz & \zz \\
S_{10} & S_{11} & \zz & \zz & \zz \\
S_{20} & S_{21} & \zz & \zz & \zz \\
\zz & \zz & \zz & \zz & \zz \\
\zz & \zz & \zz & \zz & \zz
\end{pmatrix} $
& 
$\begin{pmatrix}
S_{00} & S_{01} & \zz & \zz & \zz \\
S_{10} & S_{11} & \zz & \zz & \zz \\
\zz & \zz & \zz & \zz & \zz \\
\zz & \zz & \zz & \zz & \zz \\
\zz & \zz & \zz & \zz & \zz
\end{pmatrix} $
&
$\begin{pmatrix}
S_{00} & S_{01} & \zz & \zz & \zz \\
\zz & \zz & \zz & \zz & \zz \\
\zz & \zz & \zz & \zz & \zz \\
\zz & \zz & \zz & \zz & \zz \\
\zz & \zz & \zz & \zz & \zz
\end{pmatrix} $ \\[8mm]
Uniaxial
& 
$\begin{pmatrix}
S_{00} & \zz & \zz & \zz & \zz \\
S_{10} & \zz & \zz & \zz & \zz \\
S_{20} & \zz & \zz & \zz & \zz \\
S_{30} & \zz & \zz & \zz & \zz \\
S_{40} & \zz & \zz & \zz & \zz
\end{pmatrix} $
& 
$\begin{pmatrix}
S_{00} & \zz & \zz & \zz & \zz \\
S_{10} & \zz & \zz & \zz & \zz \\
S_{20} & \zz & \zz & \zz & \zz \\
\zz & \zz & \zz & \zz & \zz \\
\zz & \zz & \zz & \zz & \zz
\end{pmatrix} $
& 
$\begin{pmatrix}
S_{00} & \zz & \zz & \zz & \zz \\
S_{10} & \zz & \zz & \zz & \zz \\
\zz & \zz & \zz & \zz & \zz \\
\zz & \zz & \zz & \zz & \zz \\
\zz & \zz & \zz & \zz & \zz
\end{pmatrix} $
&
$\begin{pmatrix}
S_{00} & \zz & \zz & \zz & \zz \\
\zz & \zz & \zz & \zz & \zz \\
\zz & \zz & \zz & \zz & \zz \\
\zz & \zz & \zz & \zz & \zz \\
\zz & \zz & \zz & \zz & \zz
\end{pmatrix} $ \\[8mm] \hline
\end{tabular}}
\end{scriptsize}
\caption{\small{Canonical form \comm{of $\bSs$ obtained by the projection of a generic ordering tensor $\bS$ 
using Eq.\eqref{eq:Reynolds}}. The \comm{basis tensors} $\bM_i$ and $\bL_{j}$, as given in \eqref{eq:molorienttensors},
are adapted to the molecular 
and phase symmetries, where the $z$-axis is the main axis of symmetry. For the sake of readability, 
dots stand for vanishing entries. The isotropic class is omitted since the entries of the corresponding ordering 
tensors all vanish.}}
\label{tab:canonical}
\end{table}

We immediately read from Eq.~\eqref{eq:S00} that 
\begin{equation}
S_{00} = \bL_{0}\cdot \langle \bM_{0} \rangle 
= \bl_{3}\cdot \left(\frac{3}{2}\left\langle \mm_3 \tp \mm_3 - \frac{1}{3}\bI \right\rangle \right) \bl_3
= \frac{1}{2}\left\langle 3(\bl_{3}\cdot\mm_3)^2 - 1 \right\rangle = S,
\end{equation}
where $S$ is the standard uniaxial ($D_{\infty h}$) order parameter (degree of orientation).
This is due to the fact that $\mm_3$ is the main rotation axis of the molecule and we have chosen the laboratory axis $\bl_3$ 
along the uniaxial symmetry axis of the phase (see Sec. \ref{sec:background}). Likewise, we can see that when the frames of reference are adapted
to the molecular and phase symmetries of the system, $S_{10}$ corresponds to the degree of phase biaxiality $P$ while
$S_{01}$ and $S_{11}$ are the two additional nematic biaxial ($D_{2h}$) order parameters, usually written as $D$ and $C$ 
(see~\cite{07Rosso} for notations)\footnote{NB: In our notation $S_{ij}=\bL_i \cdot \langle \bM_j \rangle$ (see \eqref{eq:Sij}), whilst in~\cite{07Rosso} $S_{ij}=\langle \bM_i \rangle \cdot \bL_j$.}.

However, some ambiguity arises from this definition because in low symmetry groups not all the axes are uniquely defined
by the group operations. This, in a sense, suggests that, contrary to common understanding, the matrix entries of $\bS$ 
are not suitable candidates for the order parameters, but rather the ordering tensor $\bS$ as a whole should be considered 
as the correct descriptor of the molecular order in a low symmetry system. For example, while the $D_{2h}$ symmetry uniquely 
identifies three orthogonal directions that can be used as coordinate axes, the $C_{2h}$ symmetry only identifies
the rotation axis $z$, but gives no indication on how to identify the coordinate axes $x$ and $y$. We note that this choice 
is important, because some matrix entries are altered when we choose different $x$ and $y$ axes. By contrast, in a system
with $D_{\infty h}$ symmetry the choice of the $x$ and $y$ axes is not important as all the associated order parameters 
vanish by symmetry. An extreme example is furnished by the triclinic case, where the molecule 
and the phase posses no symmetry.
Here, there is no reason to select one preferred direction with respect to any other and any coordinate frame should 
be equivalent. However, the matrix entries of $\bS$ surely depend on the chosen axis.

Therefore, in general, a different choice of the $x$ and $y$ axes could lead to different values for the matrix entries
and the canonical form of the matrix $\bS$ is thus not uniquely defined. By contrast, the \emph{structure} of the matrix 
as given in Table~\ref{tab:canonical}, and in particular the position of the vanishing entries of $\bS$, are not affected 
by such a change of basis. This suggests that the order parameters should not be 
\comm{identified with the matrix entries, but rather with the \emph{invariants} of $\bS$ (for example, its eigenvalues).}

Finally, it is important to observe that the molecular and phase symmetries are uniquely identified once 
we recognise the structure of the canonical form, independently of the definition of the order parameters.

\begin{proposition} An ordering tensor $\,\bS\,$ \comm{is fixed by any group in} the symmetry class $\Cl_{(p,m)} =
\Cl_{p}\times \Cl_{m}$ if and only if there exist molecular and  laboratory frames of reference such that the matrix
representation of $\,\,\bS\,$ 
has a canonical form given by the entry $(p,m)$ of Table~\ref{tab:canonical}. 
\end{proposition}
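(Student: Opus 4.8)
The plan is to reduce the biconditional to an explicit description of the fixed-point subspace $L(\V)^{G_p\times G_m}$ attached to a \emph{canonical} representative $G_p\times G_m$ of the class $\Cl_{(p,m)}$, and then to check that this subspace consists precisely of the matrices whose non-vanishing entries are those shown in the $(p,m)$ slot of Table~\ref{tab:canonical}. First I would fix, for each of the four non-isotropic phase types $p$ (triclinic, monoclinic, orthorhombic, uniaxial), the canonical stabiliser supplied by the Proposition of Sec.~\ref{sec:second-rankOPs_symmetryclasses} — namely $C_i$, $C_{2h}$, $D_{2h}$, $D_{\infty h}$ — placed with its distinguished axis along $z$ and with the adapted basis $\{\bL_i\}$ of \eqref{eq:molorienttensors}; likewise $G_m$ for the molecular index. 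The key preliminary remark is that a change of laboratory and molecular frames by rotations $(\bR_P,\bR_M)\in\essotre\times\essotre$ replaces $\bS$ by $D^{(2)}(\bR_P)\,\bS\,D^{(2)}(\bR_M\transp)$ — this is \eqref{eq:changebasis_tot}, the matrix of the same $\bS$ in the new frames — and conjugates the stabilisers accordingly, the computation already carried out for $G_P$ in Sec.~\ref{sec:second-rankOPs_symmetry} together with its right-handed counterpart for $G_M$. Since, by definition of a symmetry class, every group in $\Cl_{(p,m)}$ is $\essotre$-conjugate to the canonical one, the statement "$\bS$ is fixed by a group in $\Cl_{(p,m)}$" is equivalent to "in suitable frames $\bS$ lies in $L(\V)^{G_p\times G_m}$", and the whole proposition collapses to identifying that subspace.

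For the identification I would use $D^{(2)}(\bR)\bL_i=\bR\bL_i\bR\transp$ together with the explicit basis \eqref{eq:molorienttensors} and point (3) in the proof of that Proposition, which lets one replace every point group by rotations ($D^{(2)}(\iota)=\bI$, $D^{(2)}(\sigma_h)=D^{(2)}(C_{2z})$, $D^{(2)}(\sigma_v)=D^{(2)}(C_{2x})$). A short calculation gives, in the basis $\bL_0,\dots,\bL_4$, that $D^{(2)}(C_{2z})=\mathrm{diag}(1,1,1,-1,-1)$ and $D^{(2)}(C_{2x})=\mathrm{diag}(1,1,-1,1,-1)$, while a generic rotation about $z$ fixes $\bL_0$ and acts on $(\bL_1,\bL_2)$ and on $(\bL_3,\bL_4)$ as planar rotations by the angles $2\varphi$ and $\varphi$ respectively. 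Hence the subspace of $\V$ fixed by $D^{(2)}$ of the canonical group is: the whole of $\V$ for $C_i$; $\mathrm{span}\{\bL_0,\bL_1,\bL_2\}$ for $C_{2h}$; $\mathrm{span}\{\bL_0,\bL_1\}$ for $D_{2h}$; $\mathrm{span}\{\bL_0\}$ for $D_{\infty h}$.

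Finally I would assemble the two halves: because $G_p\times G_m$ acts by left multiplication with $D^{(2)}(G_p)$ and right multiplication with $D^{(2)}(G_m)$, the tensor $\bS$ (viewed as a linear map $\V\to\V$) is invariant iff its image lies in the $G_p$-fixed subspace $\V^{G_p}$ and its kernel contains the orthogonal complement of the $G_m$-fixed subspace $\V^{G_m}$; in the adapted basis this is exactly the requirement that all rows outside the $G_p$-index set vanish and all columns outside the $G_m$-index set vanish, which is precisely the zero pattern of entry $(p,m)$ of Table~\ref{tab:canonical}, the isotropic case being the trivial $\bS=0$. The converse direction uses the same identity in reverse: if frames can be chosen so that $\bS$ has the canonical form $(p,m)$, then in those frames $\bS$ is fixed by $G_p\times G_m$, hence in the original frames by an $\essotre$-conjugate of it, which is a group in $\Cl_{(p,m)}$. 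I expect the only delicate part to be the bookkeeping of the first paragraph — keeping "change of frame $\leftrightarrow$ conjugation of the stabiliser $\leftrightarrow$ active action \eqref{eq:activeaction}" mutually consistent through all the transposes and left/right placements, and making sure each canonical axis choice is realised by a genuine rotation inside the class rather than merely an orthogonal map; the $5\times5$ matrix computations themselves are routine.
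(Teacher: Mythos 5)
Your proposal is correct, and it lands in the same place as the paper's argument, but it is organised around a different device, and in a way that is actually more self-contained. The paper's own proof is essentially a two-line appeal to the Reynolds operator \eqref{eq:Reynolds}: the canonical forms of Table~\ref{tab:canonical} are \emph{defined} as the images of $\Rey_{(p,m)}$ in adapted frames, so invariance gives $\Rey_{(p,m)}(\bS)=\bS$ and hence the canonical form, the converse being ``checked directly''. You instead derive the zero patterns from scratch by factoring the two-sided invariance $D^{(2)}(\bA_P)\,\bS\,D^{(2)}(\bA_M\transp)=\bS$ into a left condition (image of $\bS$ contained in $\V^{G_P}$, i.e.\ rows outside the $G_P$-fixed index set vanish) and a right condition (kernel of $\bS$ containing $(\V^{G_M})^{\perp}$, i.e.\ columns outside the $G_M$-fixed index set vanish), after computing the one-sided fixed subspaces $\V$, $\mathrm{span}\{\bL_0,\bL_1,\bL_2\}$, $\mathrm{span}\{\bL_0,\bL_1\}$, $\mathrm{span}\{\bL_0\}$ explicitly; this yields an independent verification of the table rather than a reference back to how it was generated, and your $5\times5$ computations ($D^{(2)}(C_{2z})$, $D^{(2)}(C_{2x})$, the $2\varphi$ and $\varphi$ rotation blocks) are all correct. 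One wording should be repaired: it is not true that ``every group in $\Cl_{(p,m)}$ is $\essotre$-conjugate to the canonical one'' --- the whole point of a symmetry class is that it collects mutually non-conjugate, merely indistinguishable groups (e.g.\ $C_3$ and $D_{\infty h}$ both lie in the uniaxial class). What you actually need, and what the classification Proposition of Sec.~\ref{sec:second-rankOPs_symmetryclasses} supplies, is that every group in the class has the same fixed-point subspace of $\V$ as a suitably rotated copy of the maximal canonical representative; with that substitution your reduction to adapted frames, and hence the whole argument, goes through.
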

\begin{proof} This can be checked by direct computation. The canonical forms in Table~\ref{tab:canonical} are obtained
by projection of a full matrix $\bS$ onto the \comm{fixed point subspace $L(\V)^{G_P \times G_M}$ using 
the Reynolds operator \eqref{eq:Reynolds}, where $G_P$ and $G_M$ are any two of the indistinguishable subgroups 
in the given symmetry class.}
The molecular and laboratory $z$ axes are assumed to be the main axes of symmetry. Let $\Rey_{(p,m)}$ be 
the \comm{corresponding} Reynolds operator. \\
If $\bS$ \comm{is fixed by $G_P \times G_M$}, then $\Rey_{(p,m)}(\bS)=\bS$. By construction, $\bS$ must
be equal to the canonical form, once we choose the molecular frame and the laboratory frame \comm{in accordance 
with the symmetry axes of $G_P$ and $G_M$.}\\
Vice versa, if $\bS$ is in a canonical form $(p,m)$, then we can check directly that it is invariant
for the application of
$\Rey_{(p,m)}$ for a suitable choice of the symmetry axes.
\end{proof}

\subsection{Determining the effective phase}
\label{sec:identification_effectivephase}
In order to determine the phase \comm{of the system} it is not necessary to identify the scalar order parameters suitable
to describe the molecular order in each case. As discussed in the previous section, it is sufficient 
to study whether the ordering tensor $\bS$ is close to one of the canonical forms of Table~\ref{tab:canonical},
for a suitable choice of the molecular and laboratory axis. 

To evaluate the distance of $\bS$ from a given \comm{fixed point subspace}, we choose any group
in the \comm{corresponding class} and describe its elements concretely by assigning the orthogonal
transformations and assuming the rotation axes. However, the rotation axes are not in general known 
(and are indeed part of the sought solution). Therefore, we calculate the distance with respect to all of the possible directions of the symmetry axes and define the distance from the symmetry class as the minimum among the distances.

It is useful to define the \emph{coefficient of discrepancy}~\cite{1998Geymonat} as the minimum relative 
distance of $\bS$ from a symmetry class

\begin{equation}
c(p,m) = \min \left\{\frac{d(\bS,\comm{L(\V)^{G}})}{\| \bS \|}\, : \, G = G_P \times G_M \in \Cl_{(p,m)} \right\}.
\label{eq:discrepancy_2}
\end{equation}
When the molecular symmetry is known, say $G_M$, we perform the optimisation only with respect to the phase 
groups and the coefficient of discrepancy depends only on the phase-index
\begin{equation}
c(p) = \min \left\{\frac{d(\bS,\comm{L(\V)^{G\times G_M}})}{\| \bS \|}\, : \, G = G_P \in \Cl_{p} \right\}.
\label{eq:discrepancy}
\end{equation}

The algorithm we propose is described schematically as follows (we specifically concentrate 
on the phase symmetry as the molecular symmetry can usually be assumed to be known \emph{a-priori}).
\comm{
\begin{enumerate}
\item[(1)] If $\| \bS \|=0$, then the phase is isotropic.
\item[(2)] If $\| \bS \| \neq 0$, then \textbf{loop over} the phase symmetry classes for \bm{$p=1,2,3$} 
(there is no need to minimise in the trivial class $\Cl_4$).

\setlength{\itemindent}{0.6cm}
\item[(2.1)] \textbf{Choose the lowest order group in $\Cl_{p}$}. Select an abstract group
$G_P$ in the chosen class. This is reasonably the one with lowest order.

\item[(2.2)] \textbf{Minimisation step.} The distance~\eqref{eq:distance} depends,
via the Reynolds operator~\eqref{eq:Reynolds}, on the concrete realisation of $G_P$, i.e.,
on the direction of the symmetry axes. Therefore we need to minimise the distance \eqref{eq:distance} 
with respect to all possible directions of the rotation axes allowed by the specific abstract group. 

\setlength{\itemindent}{0cm}
\item[(3)] \textbf{Selection step.} Clearly, the correct symmetry class is not that of the lowest distance.
For instance, any ordering tensor $\bS$ has a vanishing distance with respect to the triclinic class 
(absence of symmetry). As a second example, 
since the lattice of the stabiliser subgroups is composed by a single chain, an uniaxial ordering 
tensor has zero distance also from all the previous classes in the chain, i.e. orthorhombic, monoclinic and triclinic. \\
In principle, when $\bS$ is free from numerical errors, we should choose the class with the highest 
symmetry and vanishing coefficient of discrepancy. However, in practice we will choose the highest symmetry compatible with a coefficient of discrepancy \eqref{eq:discrepancy} not exceeding the experimental error (or simulation error).
\end{enumerate}
}

\section{Examples}
\label{sec:examples}
We now briefly describe how to apply our algorithm to determine the phase symmetry of two liquid crystal compounds, 
one composed of uniaxial ($D_{\infty h}$) molecules and the other made of biaxial ($D_{2h}$) molecules. These symmetry
assumptions are quite standard and shared by most theoretical studies in the field 
(see for example~\cite{2011turziC2h,2012turziD2h,2013TurziSluckin,2003Virga,universal,bisi2011,bisi2013}).
In the following we assume (quite reasonably) 
that the molecular symmetry and the molecular axes are known \emph{a-priori}, as is the case of Monte Carlo simulations 
or mean-field analysis. Therefore, no minimisation is required with respect to the molecular frame of reference.
Rather, we concentrate on the determination of the phase symmetry and its principal axes. 

In the examples that follow, we have produced possible outcomes of simulations by computing $\bS$ 
for a system composed of a large number of molecules. We have randomly perturbed their initial perfect uniaxial 
or biaxial order to simulate more realistic results, affected by noise.

\subsection{Uniaxial phase}
\label{sec:examples_uniaxial}
First, let us consider a uniaxial phase with symmetry axis determined by the Euler angles
$\alpha = 60^\circ$, $\beta = 30^\circ$, $\gamma = 0$.
After introducing a random perturbation, the degree of order is $S=S_{00}\approx 0.69$,
and the following ordering matrix is obtained:
\begin{equation}
\bS 
= \begin{pmatrix}
0.404 & -0.016 & 0.005 & -0.013 & -0.005 \\
-0.090 & -0.008 & -0.005 & 0.016 & -0.009 \\
0.122 & 0.006 & -0.009 & -0.004 & -0.012 \\
0.476 & 0.009 & -0.005 & 0.000 & 0.020 \\
0.234 & -0.007 & -0.014 & -0.011 & 0.010
\end{pmatrix},
\label{eq:bS_example_uniaxial}
\end{equation}
written with respect to an arbitrarily chosen laboratory frame, but with the molecular frame accurately 
selected according to the molecular symmetry. A quick look at Table~\ref{tab:canonical} correctly suggests that $\bS$ 
refers to a system of uniaxial molecules, although affected by experimental or numerical errors (only the first column 
contains significantly non-vanishing entries). However, the symmetry class of the phase, the axes of symmetry and the 
relevant order parameter(s) are yet unknown. To this end, we project $\bS$ on each symmetry class and compute
the coefficient of discrepancy \eqref{eq:discrepancy} in each case. It is unnecessary to project on the triclinic class 
since by definition this includes all the possible ordering tensors and the 
distance \eqref{eq:distance} is therefore always zero.

The optimal choice for the symmetry axes is given by minimising the distance of $\bS$ from each symmetry class.
We implement this minimisation procedure rather naively by uniformly sampling $\essotre$, i.e., we generate $N=10^4$ 
orientations of the laboratory axes uniformly. There are of course more refined optimisation algorithms that could yield 
far better results, but these fall outside the scope of the present paper. We intend to study this computational 
issue more deeply in a subsequent paper. The result of our analysis is summarised in the following table
\begin{center}
\begin{tabular}{c|ccccc}
$p$ & $0$ & $1$ & $2$ & $3$ & $4$ \\[1mm] \hline
\vspace{1mm}$c(p)$ & 1 & 0.063 & 0.056 & 0.049 & 0
\end{tabular}
\end{center}

The coefficients of discrepancy suggest that the phase is uniaxial and $\bS \in \Cl_1$. The Euler angles
of the phase axes are then found to be $\alpha \approx 63.9^\circ$, $\beta \approx 31.1^\circ$, $\gamma \approx 124^\circ$.
Note that the value of the proper rotation angle, $\gamma$, is irrelevant in a uniaxial phase. Finally,
we can write the ordering tensor in the symmetry adapted frame of reference according to \eqref{eq:changebasis_tot}.
The new ordering matrix $\bS'$ reads
\begin{equation}
\bS'
= \begin{pmatrix}
 0.684 & -0.003 & -0.006 & -0.014 & 0.013 \\
 0.006 & 0.001 & -0.004 & 0.018 & -0.007 \\
 -0.008 & 0.003 & 0.001 & -0.007 & -0.018 \\
 -0.003 & -0.021 & 0.001 & -0.004 & -0.014 \\
 -0.005 & -0.002 & 0.017 & 0.001 & -0.001 
\end{pmatrix}
\end{equation}
from which we obtain a degree of order $\approx 0.684$ in agreement with the expected value of $0.69$.

\subsection{Biaxial phase}
\label{sec:examples_biaxial}
We now present an analogous analysis, but relative to a less symmetric phase, namely $D_{2h}$.
The ordering matrix we consider in this example is
\begin{equation}
\bS 
= \begin{pmatrix}
0.301 & -0.115 & -0.003 & 0.004 & -0.001 \\
0.127 & -0.537 & 0.007 & 0.000 & 0.002 \\
0.131 & -0.354 & 0.006 & 0.000 & -0.003 \\
0.403 & -0.303 & 0.003 & 0.001 & 0.004 \\
0.118 & 0.255 & 0.000 & 0.004 & -0.001 
\end{pmatrix}
\label{eq:bS_example_biaxial}
\end{equation}
which is built to represent a biaxial phase with symmetry axes rotated by $\alpha=60^\circ$, $\beta=30^\circ$,
$\gamma=45^\circ$ with respect to the laboratory axes. The order parameters, i.e. the ordering matrix entries 
when referred to its principal axes, are $S_{00} \approx 0.507$, $S_{01} \approx -0.179$, $S_{10} \approx -0.201$, 
$S_{11} \approx 0.743$. In more standard notation, these order parameters correspond respectively to $S$, $D$, $P$ and $C$, 
albeit with different normalisation coefficients.

We sample the orientations of the laboratory frame uniformly, and find the following discrepancy coefficients 
for the five symmetry classes
\begin{center}
\begin{tabular}{c|ccccc}
$p$ & $0$ & $1$ & $2$ & $3$ & $4$ \\[1mm] \hline
\vspace{1mm}$c(p)$ & 1 & 0.43 & 0.070 & 0.017 & 0
\end{tabular}
\end{center}

The projection on the Orthorhombic class, yields $\alpha \approx 60.7^\circ$, $\beta \approx 29.7^\circ$ 
and $\gamma \approx 43.6^\circ$. The reconstruction of the ordering matrix with respect to its principal axes reads
\begin{equation}
\bS'
= \begin{pmatrix}
0.505 & -0.181 & 0.000 & 0.005 & 0.001 \\
-0.204 & 0.742 & -0.009 & 0.001 & -0.002 \\
0.06 & 0.028 & 0.000 & 0.000 & 0.005 \\
-0.005 & -0.005 & -0.005 & -0.001 & 0.001 \\
0.000 & 0.000 & 0.001 & -0.003 & 0.002 
\end{pmatrix}.
\end{equation}

\section{Conclusions}
\label{sec:conclusions}

We have proposed a method able to provide the canonical form of an ordering tensors $\bS$.
This canonical representation 
readily yields the order parameters, i.e., the scalar quantities that are usually 
adopted to describe the order 
in a liquid crystal compound. However, the physical meaning of the matrix entries 
(for low symmetry molecules and phases) 
is still a matter of debate and fall outside the scope of our paper. The laboratory axes 
of the canonical form are interpreted as
``directors'' and provide the directions of the symmetry axes, if there are any. Finally, 
we have shown that there are only five
possible phase symmetry-classes, when the orientational probability density function is 
truncated at the second-rank level. This is a standard approximation in many theoretical studies of
uniaxial and biaxial liquid crystals. 

Our proposed method is simple enough to be applicable to the analysis of real situations. 
For the examples considered in Sec.~\ref{sec:examples} the proposed algorithm seems to
be reliable and give a fast analysis of the ordering tensor that leads to the correct identification 
of a uniaxial and a biaxial phase in a model system. For this purpose it has been sufficient to
implement a very simple Monte-Carlo optimisation procedure. However, it may be appropriate to 
develop more efficient methods in case of more complex real systems.

Our strategy, based on the second-rank ordering tensor, is not able to distinguish amongst the phase groups belonging to the same class. In principle our approach could be easily extended 
to include higher-rank ordering tensors. The same mathematical ideas and tools we have
put forward could be applied to this more general case, only at the cost of more involved notations.
However, we believe that in so
doing the presentation and the readability of the paper could be seriously affected. 
Furthermore, the second-rank case is the most
relevant from a physical perspective, since most tensorial properties that can be measured in liquid crystals
are second-rank. For these reasons we have only given the detailed presentation in the case of 
a second-rank ordering tensor.

\section*{Acknowledgements} 
\comm{The authors wish to thank the two anonymous referees for their valuable comments, 
which led to an improved paper.} S.T. wishes to thank Maurizio Vianello and Antonio DiCarlo 
for instructive conversations concerning related problems in Elasticity. 
The authors are also grateful to the Isaac Newton Institute, Cambridge, where this work was originated during 
the Programme on the Mathematics of Liquid Crystals in 2013.

\appendix

\comm{\section{Appendix: orientational probability densities in $\Otre$}\label{apdx:otre}
When dealing with the orientation of molecules in the physical space, it is natural 
choose the molecular and laboratory \comm{frames of reference} with the same handedness.
The orientation of a molecule is then assigned in terms of the rotation $\bR\in \essotre$
that brings the laboratory axes into coincidence with the molecular axes. In this respect, 
an inversion or a mirror reflection do not represent a change of the orientation of the molecule and
the orientational probability density function is usually taken to be a function $f:\essotre \to \R_+$.

However, when we need to consider how the symmetry groups acts on the orientational 
distribution, for instance because we need to exploit the mirror symmetry of a molecule, 
we need to consider probability densities $g:\Otre \to \R_+$. The two pictures can be reconciled as follows. 

Since $\Otre = \essotre \times C_i$ is composed of two connected components, 
that is $\essotre$ and the other obtained from $\essotre$ by inversion, the integration over $\Otre$ separates into two integrals over $\essotre$. Namely, the ensemble average 
of a function $\chi:\Otre \to \R$ is
\begin{equation}
\langle \chi \rangle_{\Otre} = \int_{\essotre} \chi(\bR)g(\bR) \ \dd\mu(\bR) +
\int_{\essotre} \chi(\bR\iota) g(\bR\iota) \ \dd\mu(\bR).
\label{eq:averageO3}
\end{equation}

For \emph{apolar molecules}, which posses inversion symmetry, $g(\bR)=g(\bR\iota)$. 
The compatibility of the distributions in $\essotre$ requires
\begin{equation}
g(\bR) = \tfrac{1}{2}f(\bR), \qquad \text{ for all } \bR \in \essotre,  
\end{equation}
where $f(\bR)$ is the distribution function in $\essotre$ that we have defined in the text and the factor $1/2$ 
comes from the normalisation of the distributions. The ensemble average \eqref{eq:averageO3} then becomes
\begin{equation}
\langle \chi \rangle_{\Otre} = \int_{\essotre} \tfrac{1}{2}\left(\chi(\bR) + \chi(\bR\iota) \right) g(\bR) \ \dd\mu(\bR).
\label{eq:averageO3_b}
\end{equation}
This equation shows that the order parameter tensors are again calculated as averages over $\essotre$. Explicitly, we have
\begin{equation}
\langle D^{(j)} \rangle_{\Otre} = \int_{\essotre} \tfrac{1}{2}\left(D^{(j)}(\bR) + (-1)^j D^{(j)}(\bR) \right) g(\bR) \ \dd\mu(\bR).
\label{eq:opO3}
\end{equation}
where we have used
\begin{equation}
D^{(j)}(\bR\iota) = (-1)^j D^{(j)}(\bR), 
\end{equation}
an identity that follows from the generalisation of Eq. \eqref{eq:conjugation}
to $j^{\text{th}}$-rank tensors (see Eq.\eqref{eq:diagonal_action}). 
In particular, this shows that the order parameters for apolar molecules vanish if $j$ is odd, 
the second rank order parameter $\bS$ considered in the text clearly do not vanish since $j$ is even.

By contrast, for \emph{polar molecules}, which do not posses inversion symmetry, it is not possible 
(if the system is homogeneous, of the same chirality) to find an inverted molecule. 
Therefore, $g(\bR\iota) = 0$, for all $\bR \in \essotre$ and the probability density $g(\bR)$
coincides with the $f(\bR)$ in $\essotre$.
 
}




\end{document}